\documentclass[11pt]{article}
\usepackage{ifpdf}
\usepackage{amsmath}
\usepackage{amssymb}
\usepackage{amsthm}
\usepackage[margin=1in]{geometry}
\usepackage{float}
\usepackage{graphicx}
\usepackage{comment}
\usepackage{microtype}
\usepackage{caption}
\usepackage{subfig}
\usepackage{hyperref}
\usepackage{placeins}
\urlstyle{same}
\pdfminorversion=4

\usepackage{times}
\captionsetup{font=small,labelfont=bf,skip=6pt}

\intextsep=6pt plus 1pt         
\textfloatsep=6pt plus 1pt      

\ifpdf

  \usepackage[pdftex]{epsfig}

\else

    \newcommand{\href}[2]{#2}

\fi

\vfuzz2pt 
\hfuzz2pt 

\theoremstyle{definition}
\newtheorem{theorem}{Theorem}[section]

\newtheorem{definition}[theorem]{Definition}






    \setcounter{topnumber}{2}
    \setcounter{bottomnumber}{2}
    \setcounter{totalnumber}{4}     
    \setcounter{dbltopnumber}{2}    






\def\compactify{\itemsep=0pt \topsep=0pt \partopsep=0pt \parsep=0pt}
\let\latexusecounter=\usecounter
\newenvironment{itemize*}
  {\def\usecounter{\compactify\latexusecounter}
   \begin{itemize}}
  {\end{itemize}\let\usecounter=\latexusecounter}
\newenvironment{enumerate*}
  {\def\usecounter{\compactify\latexusecounter}
   \begin{enumerate}}
  {\end{enumerate}\let\usecounter=\latexusecounter}
\newenvironment{description*}
  {\begin{description}\compactify}
  {\end{description}}

\begin{document}
\title{Universal Shape Replicators via Self-Assembly with Attractive and Repulsive Forces\footnote{Research supported in part by National Science Foundation Grants CCF-1117672, CCF-1555626, and CCF-1422152.} \textsuperscript{,}\footnote{Research supported in part by National Science Foundation Grants EFRI-1240383 and CCF-1138967.} \textsuperscript{,}\footnote{6 of the pages in this paper are diagrams so do not count toward page limit.}}
\author{
Cameron Chalk\footnotemark[1]
\and
Erik D. Demaine\footnotemark[2]
\and
Martin L. Demaine\footnotemark[2]
\and
Eric Martinez\footnotemark[1]
\and
Robert Schweller\footnotemark[1]
\and
Luis Vega\footnotemark[1]
\and
Tim Wylie\footnotemark[1]
}


\date{}
\clearpage\maketitle
\thispagestyle{empty}

\vspace*{-.5cm}
\begin{center}
$^*$Department of Computer Science, University of Texas - Rio Grande Valley, Edinburg, TX 78539, USA, \url{{cameron.chalk01,eric.m.martinez02,robert.schweller,luis.a.vega01,timothy.wylie}@utrgv.edu}

\vspace*{.2cm}
$^\dagger$MIT Computer Science and Artificial Intelligence Laboratory, 32 Vassar St., Cambridge, MA 02139, USA, \url{{edemaine,mdemaine}@mit.edu}
\end{center}

\begin{abstract}
We show how to design a universal \emph{shape replicator} in a self-assembly
system with both attractive and repulsive forces.
More precisely, we show that there is a universal set of constant-size objects
that, when added to \emph{any unknown} hole-free polyomino shape, produces an
unbounded number of copies of that shape (plus constant-size garbage objects).
The constant-size objects can be easily constructed from a constant number of
individual tile types using a constant number of preprocessing self-assembly
steps.
Our construction uses the well-studied 2-Handed Assembly Model (2HAM) of
tile self-assembly, in the simple model where glues interact only with
identical glues, allowing glue strengths that are either positive (attractive)
or negative (repulsive), and constant temperature (required glue strength for
parts to hold together). We also require that the given shape has specified
glue types on its surface, and that the feature size (smallest distance between
nonincident edges) is bounded below by a constant.
Shape replication necessarily requires a self-assembly model where parts can
both attach and detach, and this construction is the first to do so using the
natural model of negative/repulsive glues (also studied
before for other problems such as fuel-efficient computation); previous
replication constructions require more powerful global operations such as an
``enzyme'' that destroys a subset of the tile types.
\end{abstract}

\setcounter{page}0
\newpage

\section{Introduction}\label{sec:introduction}

How can we harness the power of self-assembly to build a system that
first senses the shape of a given unknown object, then builds copies of
that shape, like the nanoscale equivalent of a photocopier?
Although replicators are traditionally the subject of science fiction
\cite{MemoryAlpha}, biological reproduction illustrates that this is possible
(at least approximately) in the real world, and recent biological engineering
shows that information can be replicated using
DNA crystal growth and scission \cite{SCHUL2012}.
In this paper, we investigate more complex replication of
\emph{geometric shape}, using biologically realistic models
(taken to an extreme admittedly not yet practical).
To do so, we need a sufficiently flexible algorithmic model
of self-assembly, and a replication ``algorithm'' defined by particles
that interact with each other and the given object.

The standard algorithmic abstraction of self-assembly is to model the
self-assembling particles as \emph{Wang tiles}, that is, unit squares with a specified ``glue''
on each side, which can translate but not rotate; each glue has a nonnegative
integer \emph{strength}.
In the \emph{2-Handed Assembly Model} (2HAM), two assemblies (eventually)
join together if they can be translated so as to match up glues of total
strength at least $\tau$, the \emph{temperature} of the system.
The resulting \emph{terminal} assemblies are those that do not join into
any other assemblies.

We can define the \emph{replication problem} in this model:
given an unknown initial assembly of tiles, design a collection of tiles
or small assemblies to add to the self-assembly system such that the resulting
terminal assemblies consist of copies of the given shape (plus possibly
some small ``trash'' assemblies).  However, this goal is impossible to achieve
in just the 2HAM model, or any model with just a mechanism to join assemblies
together: to sense a shape without destroying it, we need to be also able to
split assemblies back apart.

The first extension of the 2HAM shown to enable a solution to the
replication problem is the \emph{RNAse enzyme} staged assembly model
\cite{Abel:2010:SRT:1873601.1873686}.  In this model, tiles can be of two
types (DNA or RNA), the given shape is all one type (DNA), and there is an
operation that destroys all tiles of the other type (RNA).
Abel et al.~\cite{Abel:2010:SRT:1873601.1873686} show that this model enables
making a desired number $k$ of copies of a given unknown hole-free shape or,
through a complicated construction, infinitely many copies of the shape
using just a constant number of tiles and stages.
This result requires that the shape has a \emph{feature size} (minimum distance
between two nonincident edges) of $\Omega(\lg n)$ where $n$ is the (unknown)
number of tiles in the shape.
The model is unsatisfying, however, in the way that it requires multiple stages
and a global operation that modifies all tiles.

Another extension of the 2HAM shown to enable a partial solution to the
replication problem is the \emph{Signal Tile Assembly Model} (STAM)
\cite{keenan14exponential}.  This powerful model allows tiles to change
their glues and trigger assembly/disassembly events
when tiles attach to other tiles.  Keenan et al.~\cite{keenan14exponential}
show that this model enables replicating a given unknown \emph{pattern} of
tiles in a rectangular shape, but not an arbitrary shape.
Hendricks et al.~\cite{Hendricks2015} show that this model enables infinitely
replicating a given unknown hole-free shape of feature size at least~$2$.
The model is unsatisfying, however, in the way that it allows tiles to
have arbitrarily complex behaviors.
(Recent results show how to simulate part of STAM using 2HAM (in 3D)
\cite{Fochtman2015}, but this simulation necessarily cannot simulate
the necessary aspect of breaking assemblies apart.)

In this paper, we study a simple extension of 2HAM to allow glues of
negative strength, that is, repulsive forces in addition to standard
attractive forces.  This extension is practical, as biology implements both
types of forces \cite{Rothemund01022000}, and well-studied theoretically:
negative glues have already been shown to enable fuel-efficient
computation \cite{SS2013FEC}, space-efficient computation \cite{Doty2013},
and computation even at temperature $\tau=1$ \cite{PRS2016RMN,rgTAM}.  The complexity of combinatorial optimization problems with negative glues has also been studied~\cite{REIF20111592}.  We show that shape replication is possible in this model:
adding a fixed constant number of constant-size assemblies
to a given unknown hole-free shape results in terminal assemblies of
infinitely many copies of that shape, plus constant-size trash assemblies.



\section{Definitions and Model}\label{sec:definitions}

\begin{figure*}
	\begin{center}
	\includegraphics[width=.5\textwidth]{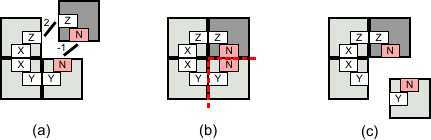}
	\caption{  A simple example of attachment and detachment events and the notation for our constructions. On each tile, the glue label is presented. Red (shaded) labels represent negative glues, and the relevant glue strengths for the tiles can be found in the captions. For caption brevity, for a glue type $X$ we denote $str(X)$ simply as $X$ (e.g. $X + Y \implies str(X) + str(Y)$). In this temperature one, $(\tau=1) $ example, $X = 2$, $Y=1$, $Z=2$, and $N=-1$. (a) The three tile assembly on the left attaches with the single tile with strength $Z + N = 2  - 1 = \tau$ resulting in the $2\times2$ assembly shown in (b). However, this $2\times2$ assembly is unstable along the cut shown by the dotted line, since $Y + N = 1 - 1 < \tau$. Then the assembly is breakable into the assemblies shown in (c).
\label{fig:modelex} }

	\end{center}
\end{figure*}

Here we define the two-handed tile self-assembly model (2HAM) with both negative and positive strength glue types, as well as formulate the problem of designing a tile assembly system that replicates any input shape.  The 2HAM model was first presented in~\cite{2HABTO}.  The 2HAM is a variant of the abstract tile self-assembly model (aTAM) first presented in~\cite{Roth01} in which there is no seed tile, and large assemblies may combine together.

\subsection{Tile Self-Assembly Model}
\textbf{Tiles and Assemblies.} A tile is an axis-aligned unit square centered at a point in $\mathbb{Z}^2$, where each edge is labeled by a \emph{glue} selected from a glue set $\Pi$.
A \emph{strength function} ${\rm str} : \Pi \rightarrow \mathbb{Z}$ denotes the \emph{strength} of each glue.
Two tiles that are equal up to translation have the same \emph{type}.  A \emph{positioned shape} is any subset of $\mathbb{Z}^2$.  A \emph{positioned assembly} is a set of tiles at unique coordinates in $\mathbb{Z}^2$, and the positioned shape of a positioned assembly $A$ is the set of coordinates of those tiles.

For a given positioned assembly $\Upsilon$, define the \emph{bond graph} $G_\Upsilon$ to be the weighted grid graph in which each element of $\Upsilon$ is a vertex and the weight of an edge between tiles is the strength of the matching coincident glues or~0.\footnote{Note that only matching glues of the same type contribute a non-zero weight, whereas non-equal glues always contribute zero weight to the bond graph.  Relaxing this restriction has been considered as well~\cite{AGKS05g}.}  A positioned assembly $C$ is said to be \emph{$\tau$-stable} for positive integer $\tau$ provided the bond graph $G_C$ has min-cut at least $\tau$, and $C$ is said to be \emph{connected} if every pair of vertices in $G_C$ has a connecting path using only positive strength edges.

For a positioned assembly $A$ and integer vector $\vec{v} = (v_1, v_2)$, let $A_{\vec{v}}$ denote the assembly obtained by translating each tile in $A$ by vector $\vec{v}$.  An \emph{assembly} is a set of all translations $A_{\vec{v}}$ of a positioned assembly $A$.  An assembly is $\tau$-stable if and only if its positioned elements are $\tau$-stable.  An assembly is \emph{connected} if its positioned elements are connected.  A \emph{shape} is the set of all integer translations for some subset of $\mathbb{Z}^2$, and the shape of an assembly $A$ is defined to be the union of the positioned shapes of all positioned assemblies in $A$.  The \emph{size} of either an assembly or shape $X$, denoted as $|X|$, refers to the number of elements of any positioned element of $X$.

\textbf{Breakable Assemblies.}  An assembly is \emph{$\tau$-breakable} if it can be cut into two pieces along a cut whose strength sums to less than $\tau$.  Formally, an assembly $C$ is \emph{breakable} into assemblies $A$ and $B$ if $A$ and $B$ are connected, and the bond graph $G_{C'}$ for some assembly $C' \in C$ has a cut $(A',B')$ for $A'\in A$ and $B'\in B$ of strength less than $\tau$. We call $A$ and $B$ a pair of \emph{pieces} of the breakable assembly $C$.

\textbf{Combinable Assemblies.}
Two assemblies are \emph{$\tau$-combinable} provided they may attach along a border whose strength sums to at least $\tau$. Formally, two assemblies $A$ and $B$ are \emph{$\tau$-combinable} into an assembly $C$ provided $G_{C'}$ for any $C'\in C$ has a cut $(A',B')$ of strength at least $\tau$ for some $A'\in A$ and $B' \in B$.  We call $C$ a \emph{combination} of $A$ and $B$.

Note that $A$ and $B$ may be combinable into an assembly that is not stable.  This is a key property that is leveraged throughout our constructions.  See Figure~\ref{fig:modelex} for an example.

\textbf{States, Tile Systems, and Assembly Sequences.}
A \emph{state} is a multiset of assemblies whose counts are in $Z^+ \bigcup \infty$.  We use notation $S(x)$ to denote the multiplicity of an assembly $x$ in a state $S$.  A state $S_1$ \emph{transitions} to a state $S_2$ at temperature $\tau$, and is written as $S_1 \rightarrow^{\tau}_1 S_2$, if $S_2$ is obtained from $S_1$ by either replacing a pair of combinable assemblies in $S_1$ with their combination assembly, or by replacing a breakable assembly in $S_1$ by its pieces.  We simplify this to $S_1 \rightarrow_1 S_2$ when $\tau$ is clear from context.  We write $S \rightarrow^{\tau} S'$ to denote the transitive closure of $\rightarrow^{\tau}_1$, i.e., $S \rightarrow^{\tau} S'$ means that $S \rightarrow^{\tau}_1 S_1 \rightarrow^{\tau}_1 S_2 \rightarrow^{\tau}_1 \ldots S_k \rightarrow^{\tau}_1 S'$ for some sequence of states $\langle S_1, \ldots S_k \rangle$.

A \emph{tile system} is an ordered tuple $\Gamma = (\sigma, \tau )$  where $\sigma$ is a state called the \emph{initial system state} and $\tau$ is a positive integer called the \emph{temperature}.  We refer to any sequence of states $\langle \sigma, S_1, S_2, \ldots, S_k \rangle$ such that $\sigma \rightarrow^{\tau}_1 S_1$, and $S_i \rightarrow^{\tau}_1 S_{i+1}$ for $i=\{1\dots k-1\}$, as a \emph{valid assembly sequence} for $\Gamma$.  We denote any state $P$ at the end of of a valid assembly sequence (equivalently, $\sigma \rightarrow^{\tau} P$) as a \emph{producible state} of $\Gamma$, and any assembly contained in a producible state is said to be a \emph{producible assembly}.  A producible state $T$ is said to be {terminal} if $T$ does not transition at temperature $\tau$ to any other state.  A producible assembly is \emph{terminal} if it is not breakable and not combinable with any other producible assembly.

\subsection{Shape Replication Systems}

A system $\Gamma = (\sigma, \tau)$ is a \emph{universal shape replicator} for a class of shapes if for any shape $X$ in the class, there exists an assembly $\Upsilon$ of shape $X$ such that system $\Gamma' = (\sigma \bigcup \Upsilon, \tau)$ (i.e., $\Gamma$ with a single copy of $\Upsilon$ added to the initial state) produces an unbounded number of assemblies of shape $X$, and essentially nothing else.  We formalize this in the following definition, and then add some discussion of additional desirable properties a replicator might have.

\begin{definition}\label{def} [Universal Shape Replicator]
A system $\Gamma = (\sigma, \tau)$ is a \emph{universal shape replicator} for a class of shapes $U$ if for any shape $X \in U$, there exists an assembly $\Upsilon$ of shape $X$ such that system $\Gamma' = (\sigma \bigcup \Upsilon, \tau)$ has the following properties:
\begin{itemize}
    \setlength{\itemsep}{1pt}
    \setlength{\parskip}{0pt}
    \setlength{\parsep}{0pt}
    \item For any positive integer $n$ and producible state $S$, there exists a producible state $S'$ containing at least $n$ terminal assemblies of shape $X$ such that $S \rightarrow^{\tau} S'$.
    \item All terminal assemblies of super-constant size have shape $X$.
\end{itemize}
\end{definition}

In addition to the above replicator properties, there are some additional desirable properties a universal replicator might have.  For example, the producible assemblies of the system should be limited in size as much as possible ($O(|X|)$ in the best possible case).  Additionally, it may be desirable to place substantial limitations on the complexity of the initial input assembly $\Upsilon$, e.g., require its surface to expose essentially a single type of glue for each edge orientation.  Our universal replicator achieves the following \emph{bonus} constraints.

\begin{definition} [Bonus properties!]
A universal replicator is said to be \emph{sleek} if it has the following properties:
    \begin{itemize}
        \setlength{\itemsep}{1pt}
        \setlength{\parskip}{0pt}
        \setlength{\parsep}{0pt}
        \item (bonus!) All producible assemblies have size $O(|X|)$.
        \item (bonus!) The input assembly $\Upsilon$ is rather simple:  infinite internal bonds, and a single glue type for each edge orientation (e.g. \emph{North}, \emph{East}, \emph{South}, and \emph{West}) for exposed surface glues, with at most $O(1)$ \emph{special} tiles violating this convention.
    \end{itemize}
\end{definition}

The class of shapes which can be replicated by the system described in this paper are hole-free polyominoes with \emph{feature size} of at least $9$. The feature size of a shape is defined as follows (we use the same definition as~\cite{Abel:2010:SRT:1873601.1873686}): for two points $a,b$ in the shape, let $d(a,b) = \max(|a_x - b_x|, |a_y - b_y|)$. Then the feature size is the minimum $d(a,b)$ such that $a,b$ are on two non-adjacent edges of the shape. The feature size ensures that the replication gadgets ($\mathcal{O}(1)$ sized assemblies used to replicate the shape) function properly (e.g. the gadgets can encompass the shape without ``bumping into'' each other). Formally, we prove the following:
\newtheorem*{thm:main}{Theorem~\ref{thm:main}}
\begin{thm:main}
There exists a sleek universal shape replicator $\Gamma = (\sigma, 10)$ for genus-$0$ (hole-free) shapes with feature size of at least $9$.
\end{thm:main}

\section{Overview of Replication Process}\label{sec:highLevelShapeReplication}
\begin{figure}
  \centering
  \subfloat[Input Shape, $\Upsilon$ \label{fig-outer:a} ]{\includegraphics[width=0.25\textwidth]{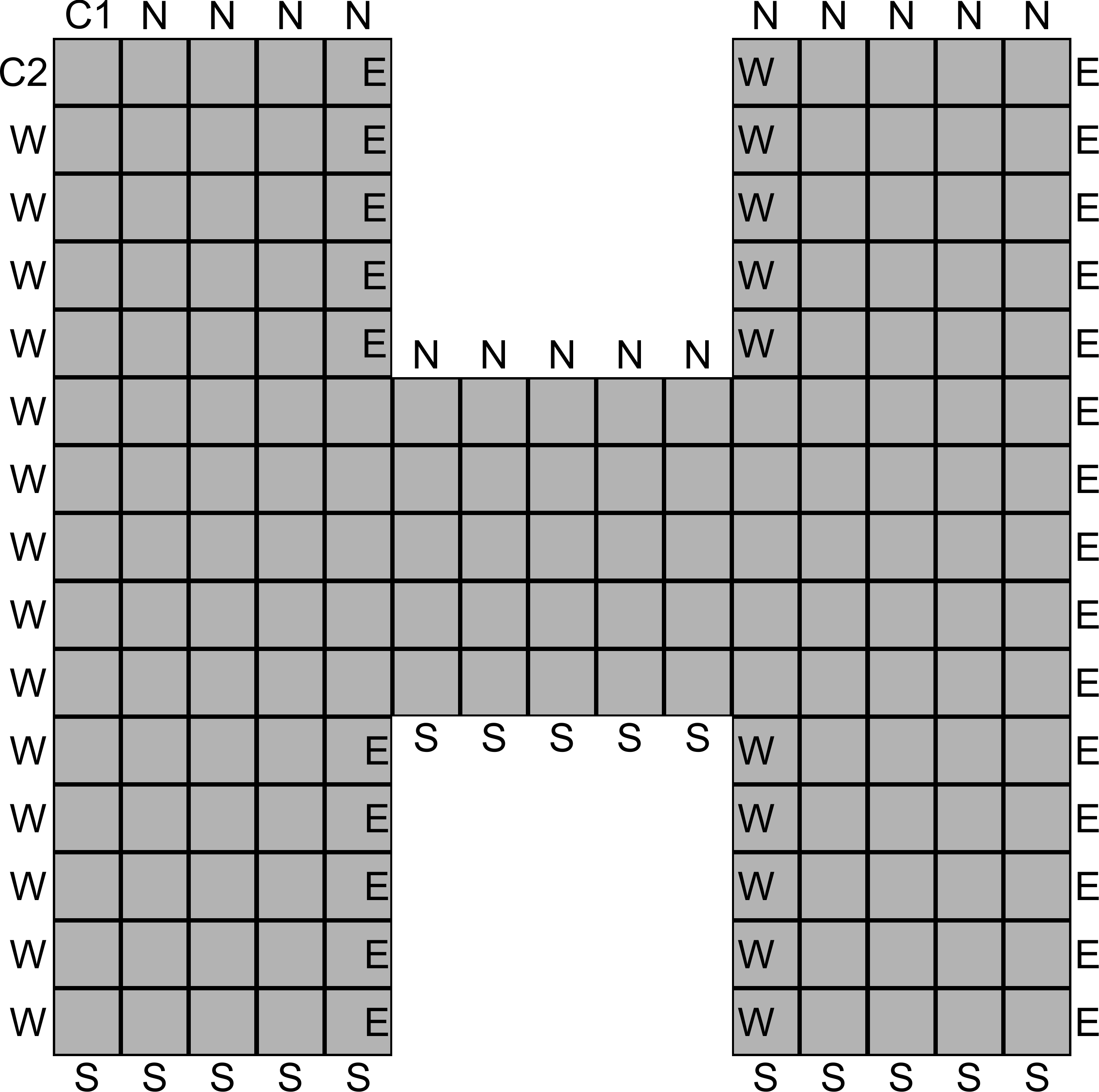}}
  \qquad %
  \subfloat[Start of process \label{fig-outer:b} ]{\includegraphics[width=0.25\textwidth]{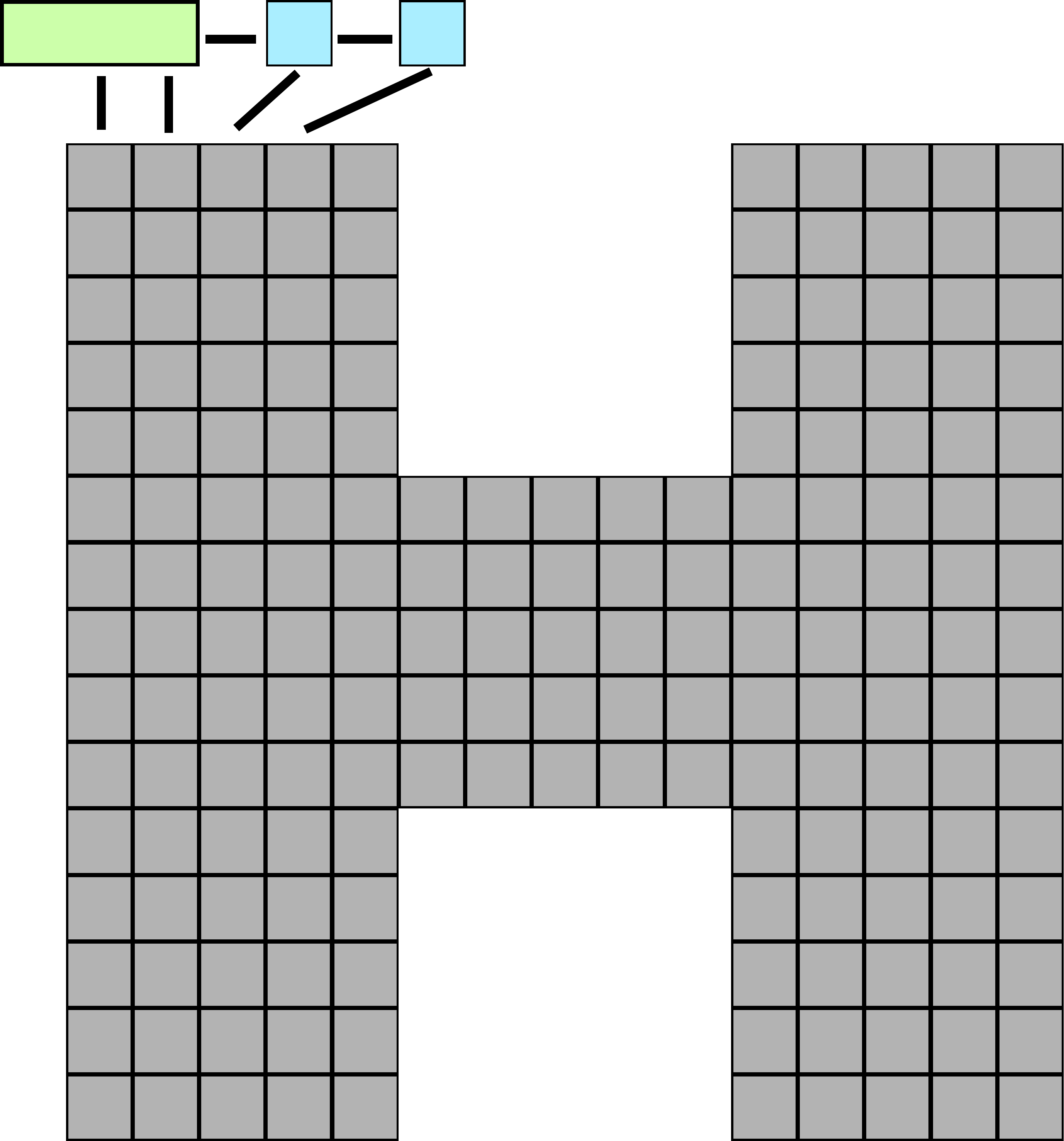}}
  \qquad %
  \subfloat[MOLD{[$\Upsilon$]} \label{fig-outer:c}]{\includegraphics[width=0.25\textwidth]{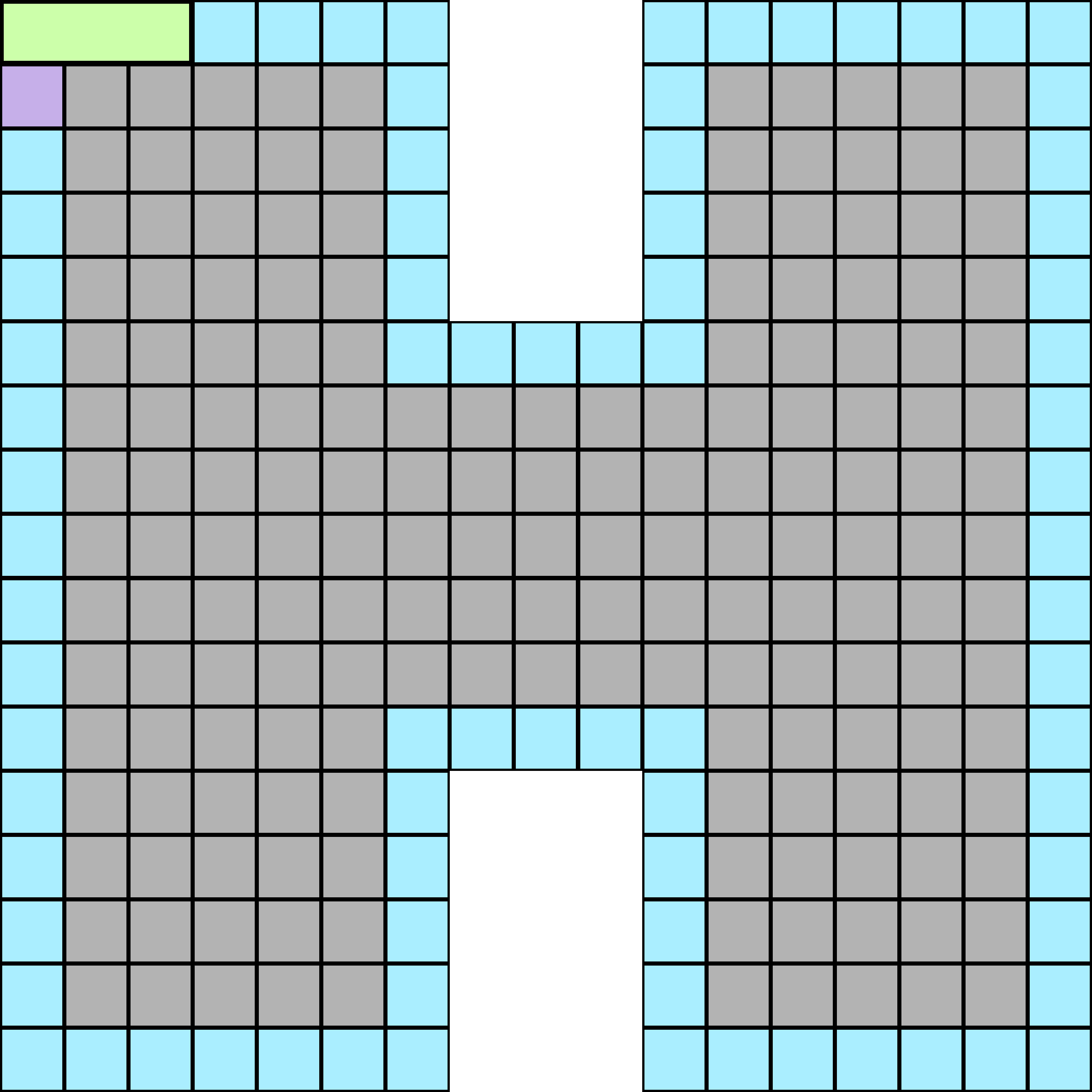}}
  \qquad %
  \subfloat[Drilling \label{fig-outer:d} ]{\includegraphics[width=0.25\textwidth]{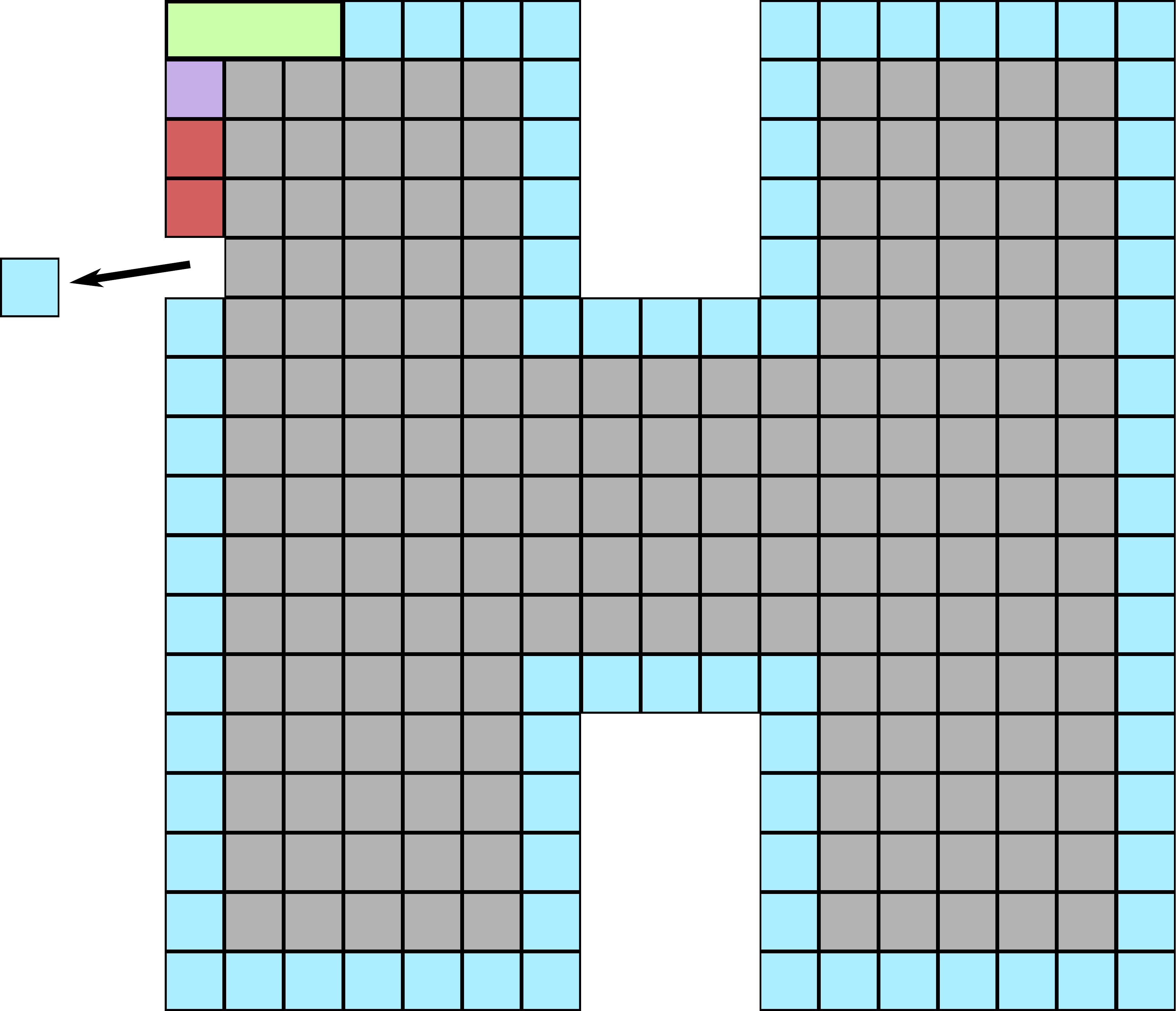}}
  \qquad %
  \subfloat[FRAME{[$\Upsilon$]} \label{fig-outer:e} ]{\includegraphics[width=0.25\textwidth]{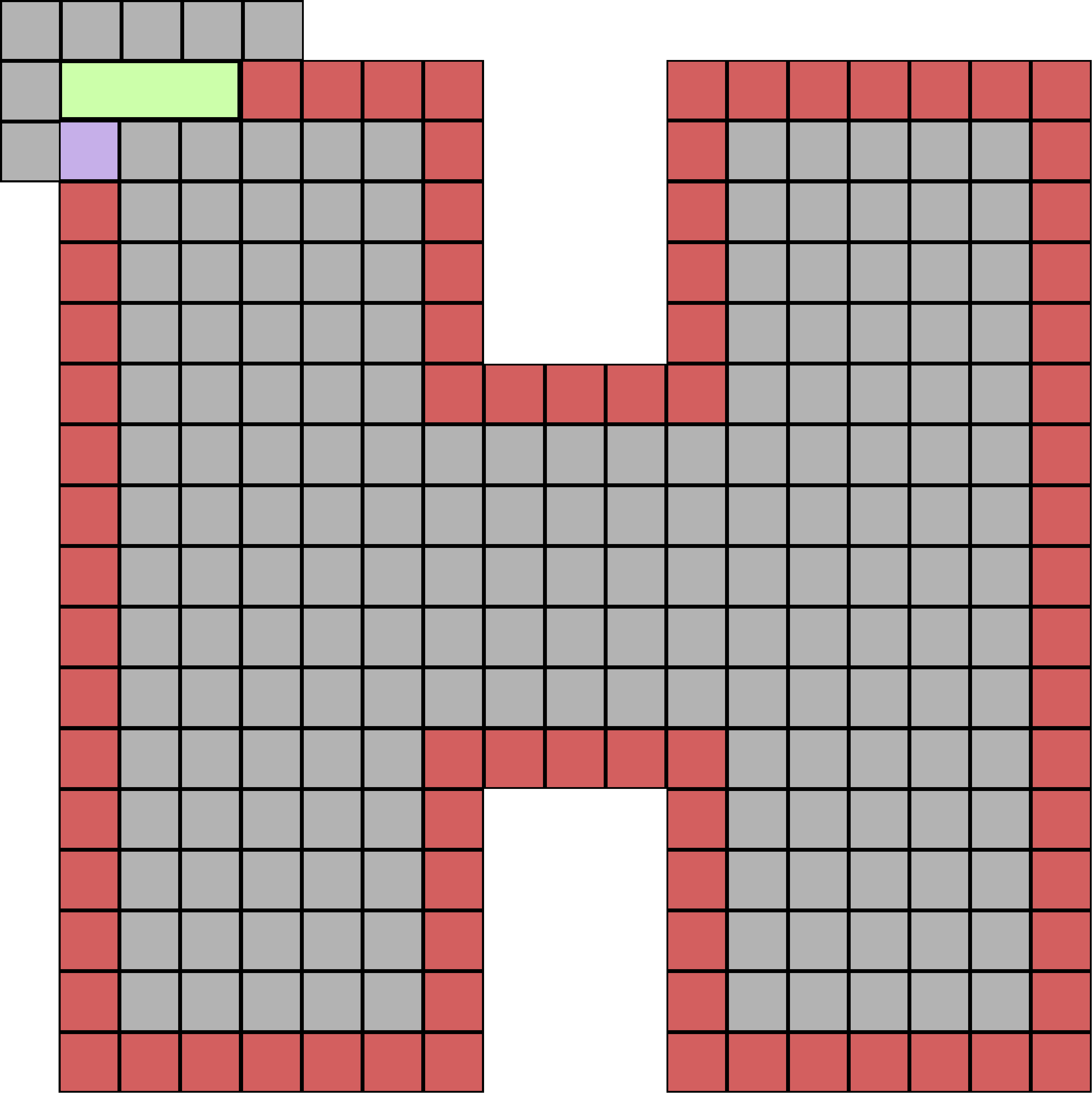}}
  \qquad %
  \subfloat[FRAME \label{fig-outer:f} ]{\includegraphics[width=0.25\textwidth]{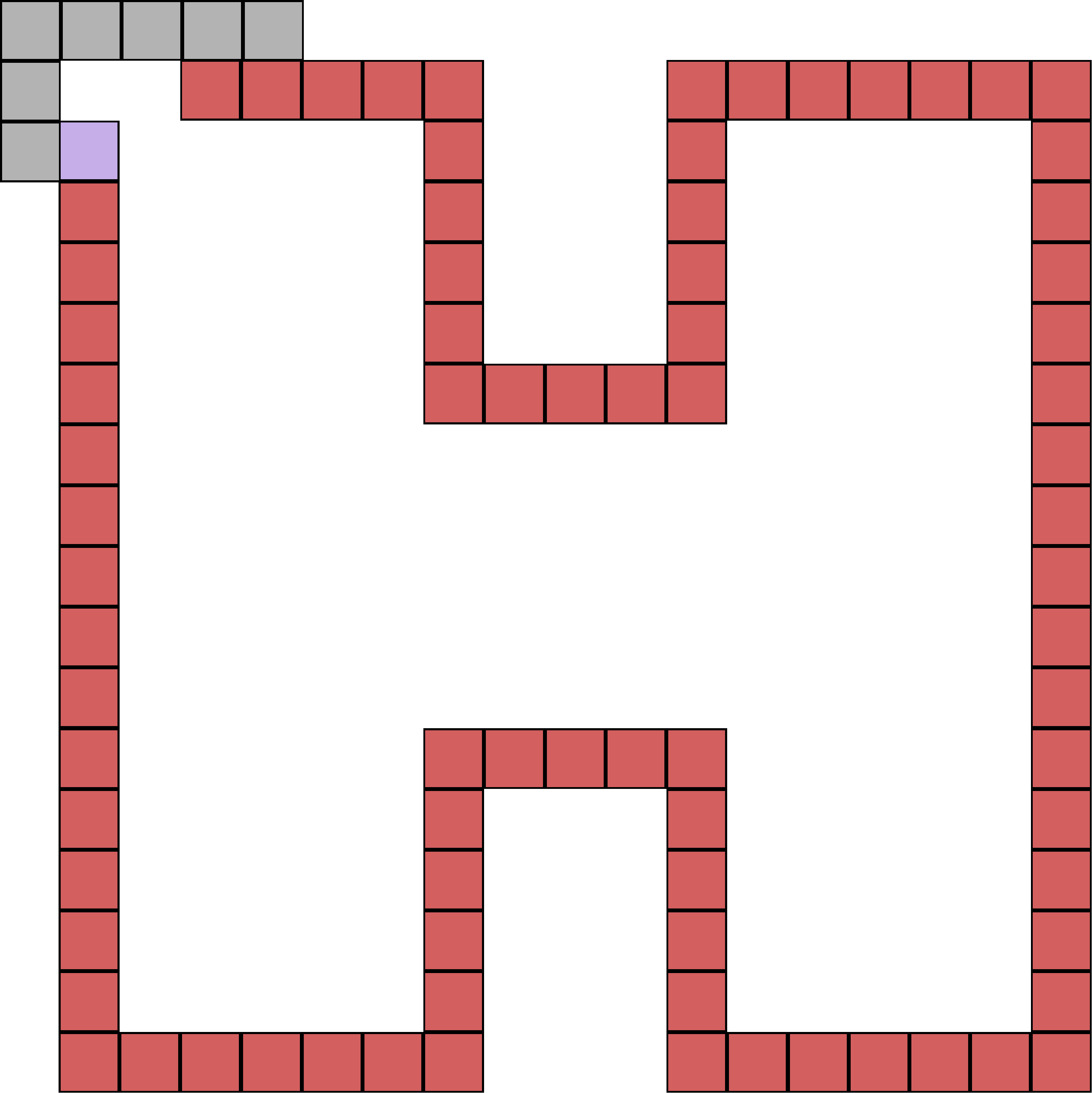}}
  \caption{High level process for making the FRAME from input shape $\Upsilon$.}
  \label{fig-outer:mold}
\end{figure}

\begin{figure}
  \centering
  \subfloat[Start of inner process \label{fig-inner:a} ]{\includegraphics[width=0.25\textwidth]{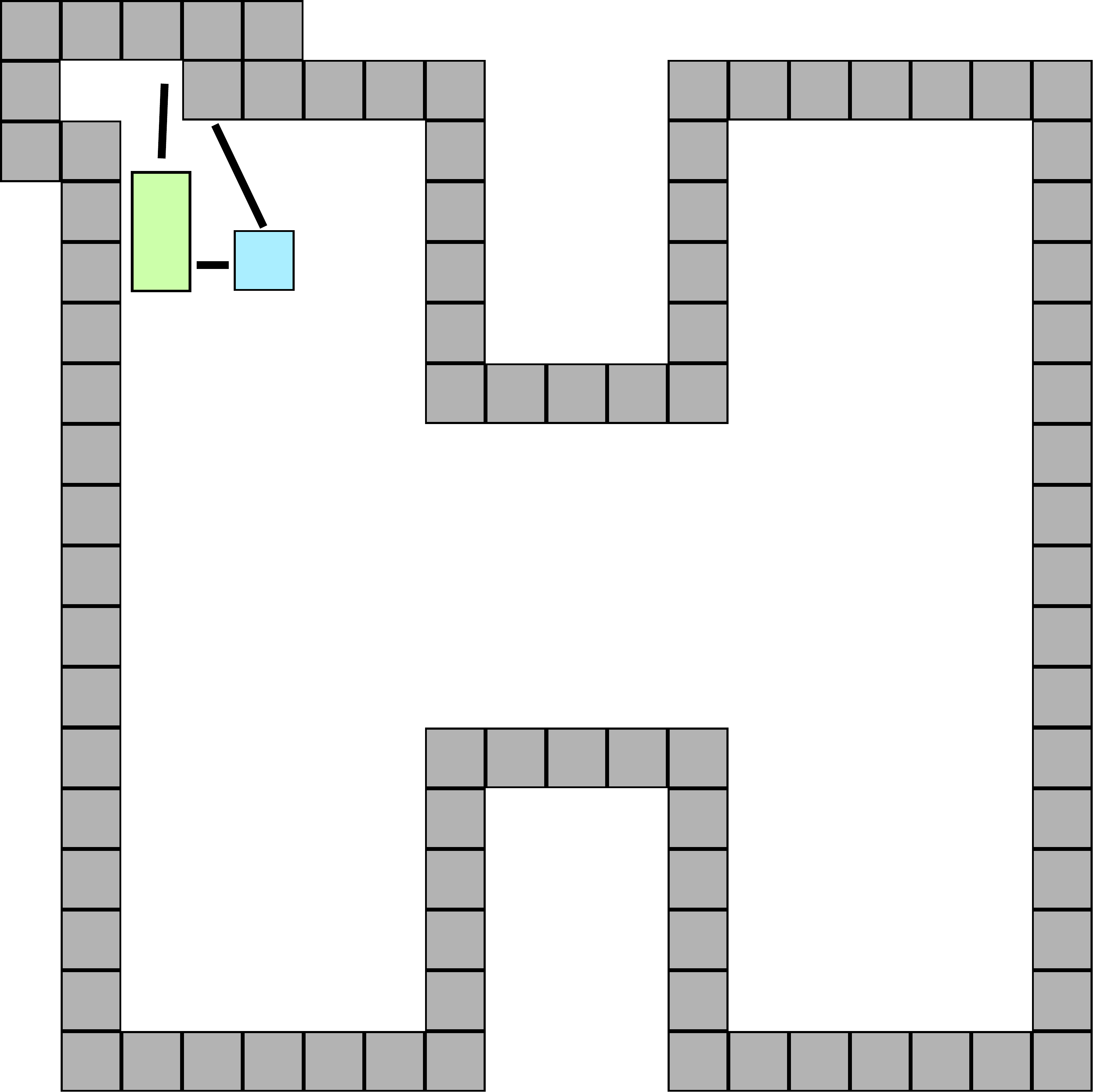}}
  \qquad %
  \subfloat[FRAME{[MOLD]} \label{fig-inner:b}]{\includegraphics[width=0.25\textwidth]{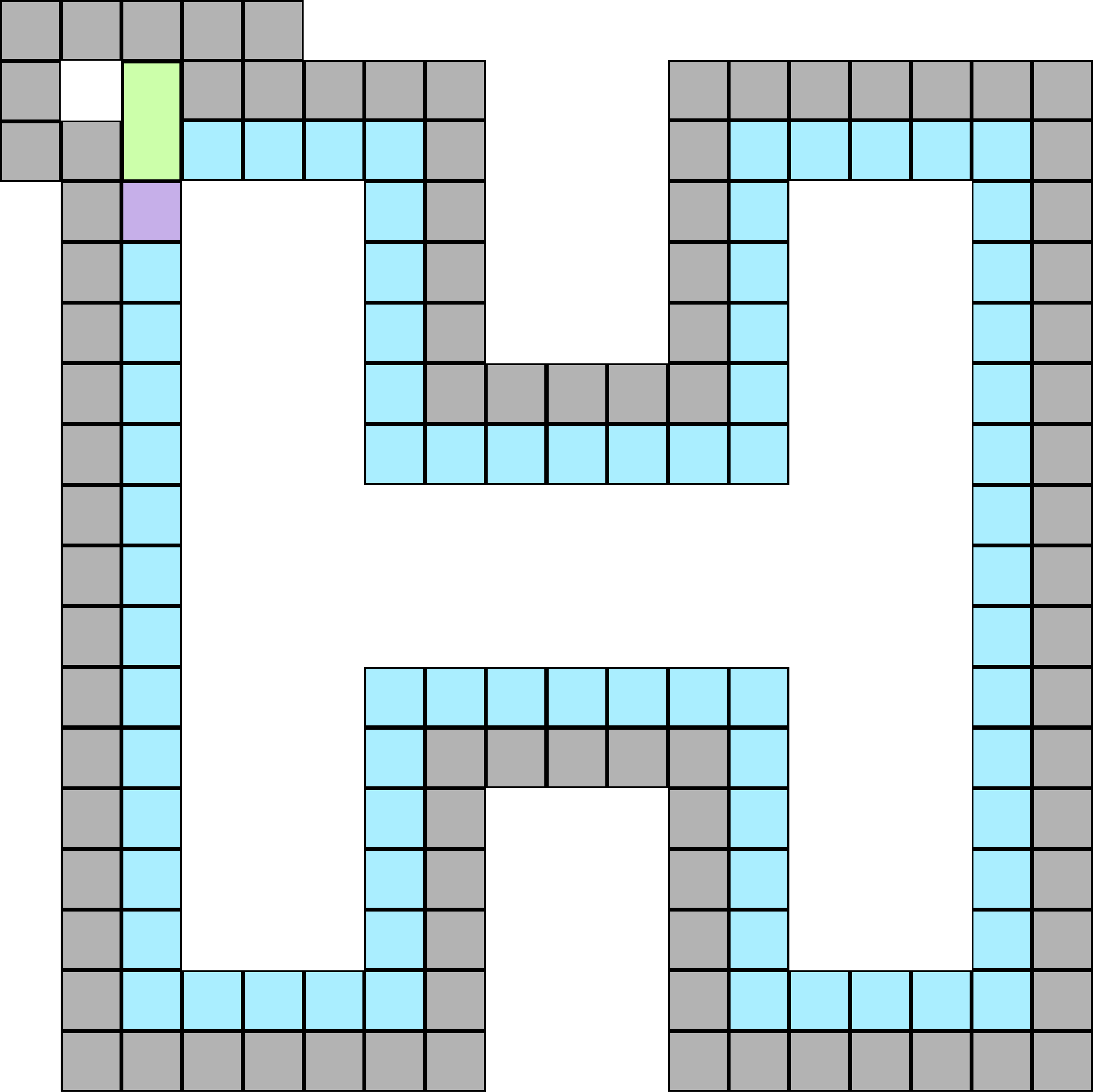}}
  \qquad %
  \subfloat[Completed drilling \label{fig-inner:c} ]{\includegraphics[width=0.25\textwidth]{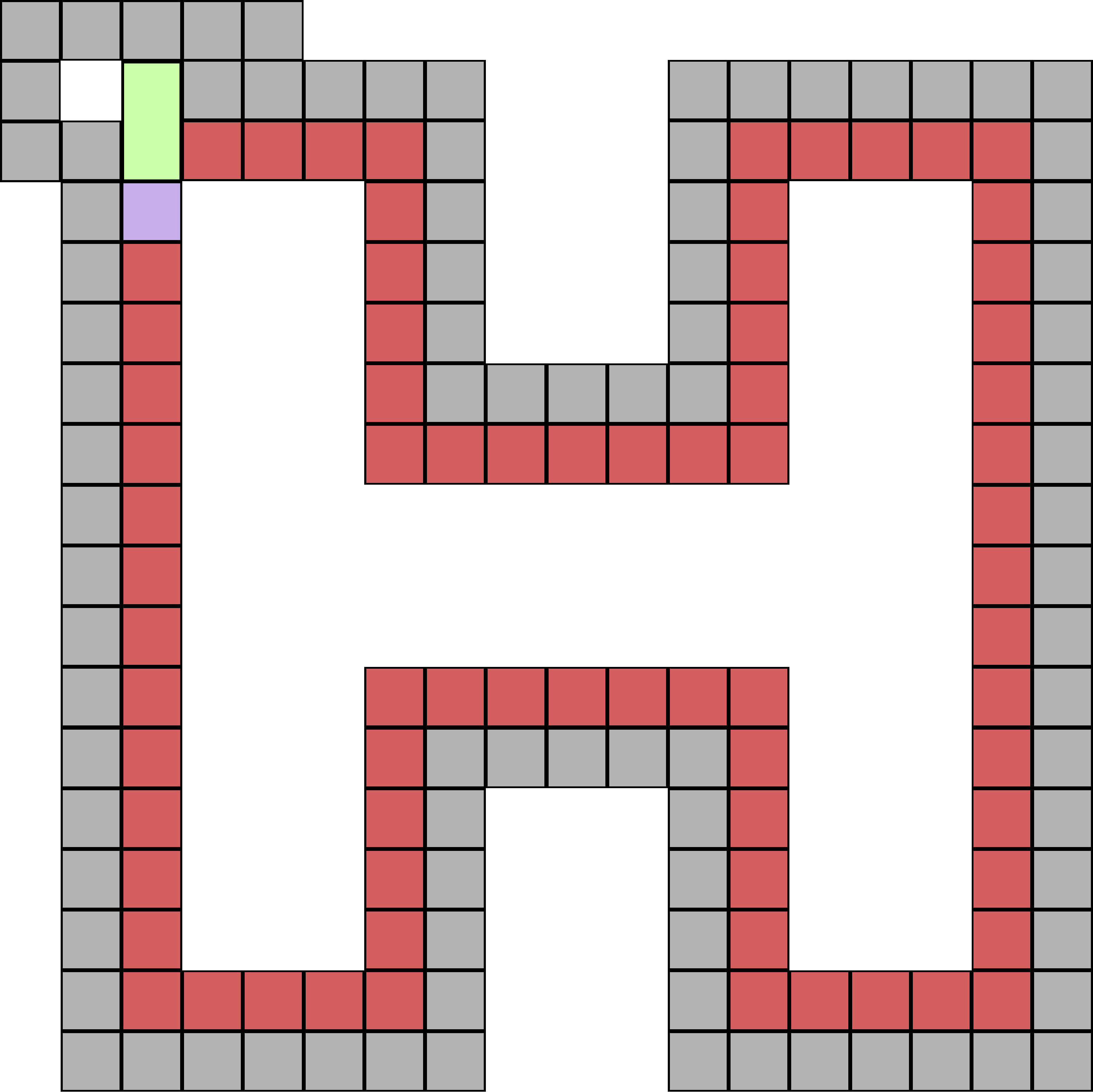}}
  \qquad %
  \subfloat[FRAME{[HOLLOW[$\Upsilon$]]} \label{fig-inner:d} ]{\includegraphics[width=0.25\textwidth]{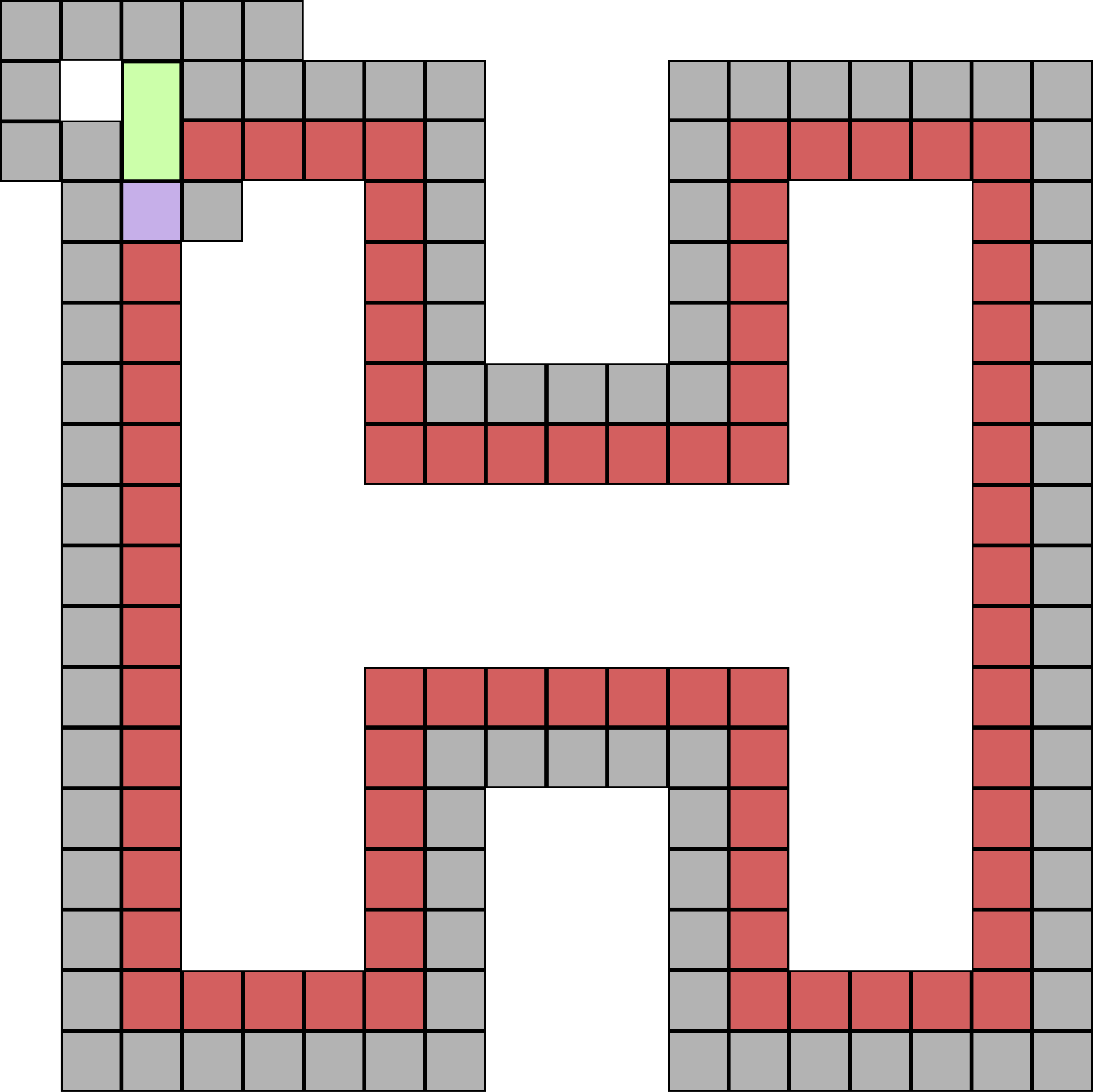}}
  \qquad %
  \subfloat[FRAME{[START]}\label{fig-inner:e} ]{\includegraphics[width=0.25\textwidth]{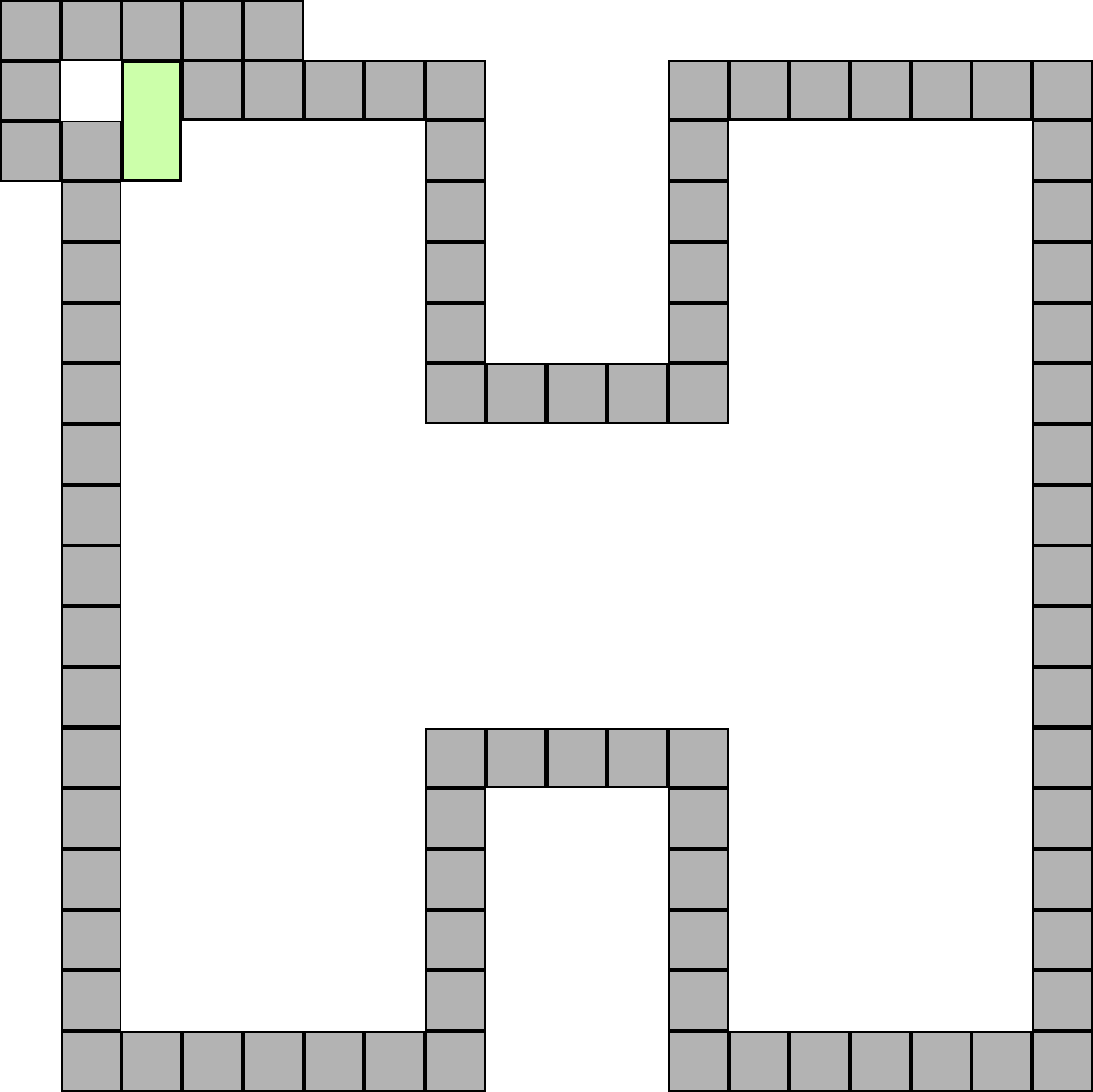}}
  \qquad %
  \subfloat[HOLLOW{[$\Upsilon$]} \label{fig-inner:f} ]{\includegraphics[width=0.25\textwidth]{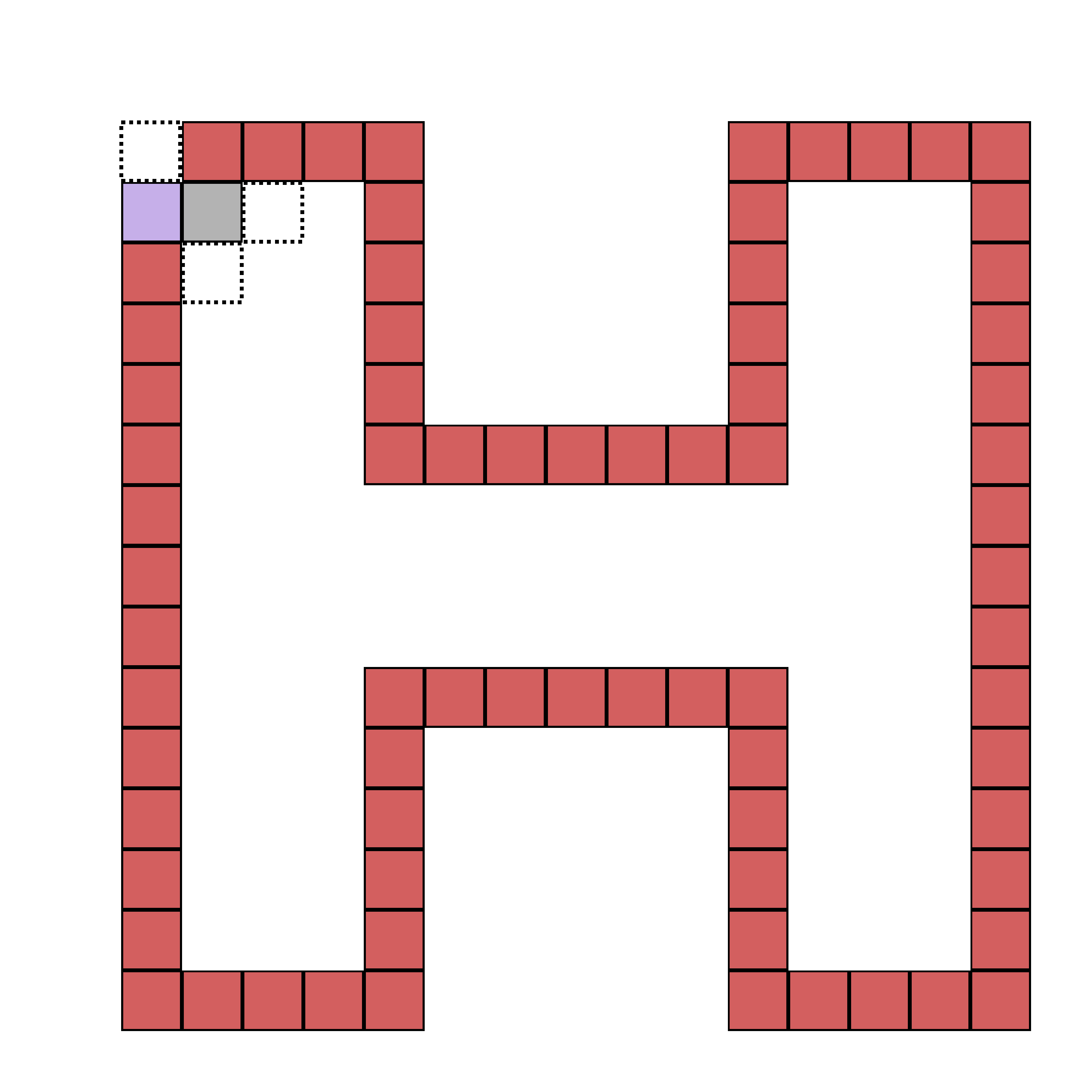}}
  \caption{High level process for making a copy of an input shape $\Upsilon$ from a \emph{frame} of $\Upsilon$.}
  \label{fig-inner}
\end{figure}

The high-level process for replicating a given input assembly $\Upsilon$ is described here. We assume that $\Upsilon$ has special glues along its perimeter such that all North, East, South, and West facing edges have glues of $N, E, S,$ and $W$, respectively, with the exception of the Northernmost-Westernmost unit of the shape having a North glue of $C1$ and West glue of $C2$ (Fig. \ref{fig-outer:a}). Intuitively, the process is to assemble an outline of $\Upsilon$ that is filled to have the same shape as $\Upsilon$. In Phase 1, \emph{mold gadgets} attach clockwise along the outside perimeter of $\Upsilon$, to detect the edges of the shape. In Phase 2, the mold gadgets are replaced, counterclockwise gadget-by-gadget, by \emph{drill gadgets}. These drill gadgets follow the path laid out by the mold gadgets and do not attach to the shape. The drill gadgets use negative glues to destabilize adjacent mold gadgets. This technique is used to create an assembly called the FRAME which outlines $\Upsilon$ but whose bounding box is too large to simply fill to get a copy of the shape of $\Upsilon$. In Phase 3, \emph{inner mold gadgets} detect the outline of the inside perimeter of the FRAME. In Phase 4, the inner mold gadgets are replaced, counterclockwise gadget-by-gadget, by \emph{inner drill gadgets} in order to begin destabilizing the FRAME from the assembly. With the help of an additional \emph{inner post-drill gadget}, the assembly destabilizes into the FRAME and HOLLOW[$\Upsilon$]. Using \emph{fill gadgets}, HOLLOW[$\Upsilon$] is filled to create COPY[$\Upsilon$], which is a copy of the original shape.

\vspace{5mm}
\textbf{Phase 1: Shape detection via mold gadgets (Section~\ref{sec:phase1Gadgets})}
\begin{enumerate}
  \setlength{\itemsep}{1pt}
  \setlength{\parskip}{0pt}
  \setlength{\parsep}{0pt}
  \item Place an \emph{outer start gadget} at the Northernmost-Westernmost corner utilizing glues $C1$ and $N$ (Fig.~\ref{fig-outer:b}).
  \item From the East edge of the outer start gadget, moving clockwise, trace the outside perimeter of the shape with a layer of \emph{mold gadgets}. Call the result MOLD[$\Upsilon$] (Figs. \ref{fig-outer:b}, \ref{fig-outer:c}).
\end{enumerate}

\textbf{Phase 2: Drilling to create a FRAME (Section~\ref{sec:phase2Gadgets})}
\begin{enumerate}
  \setlength{\itemsep}{1pt}
  \setlength{\parskip}{0pt}
  \setlength{\parsep}{0pt}
  \item Place a \emph{pre-drill} gadget at the Northernmost-Westernmost corner of MOLD[$\Upsilon$].
  \item Starting from the pre-drill gadget, and moving counter-clockwise, replace the mold gadgets in the surrounding layer with \emph{drill gadgets} that do not have affinity to the shape (Fig. \ref{fig-outer:d}).
  \item When done drilling, place a \emph{post-drill gadget}, call the result FRAME[$\Upsilon$] (Fig. \ref{fig-outer:e}).
  \item FRAME[$\Upsilon$] is unstable and separates into two assemblies, FRAME (Fig. \ref{fig-outer:f}) and START[$\Upsilon$]. START[$\Upsilon$] is an assembly that contains only the input shape with the start gadget attached.
\end{enumerate}

\textbf{Phase 3: Detecting the inside of the FRAME (Section~\ref{sec:phase3Gadgets})}
\begin{enumerate}
  \setlength{\itemsep}{1pt}
  \setlength{\parskip}{0pt}
  \setlength{\parsep}{0pt}
  \item Place an \emph{inner start gadget} at the Northernmost-Westernmost corner on the inside perimeter of FRAME, call the result FRAME[START] (Fig. \ref{fig-inner:a}).
  \item Starting from the inner start gadget, and moving clockwise, trace the inside perimeter of FRAME[START] with a layer of mold gadgets. Call the result FRAME[MOLD] (Fig. \ref{fig-inner:b}).
\end{enumerate}
\newpage
\textbf{Phase 4: Drilling and post-drilling to create HOLLOW[$\Upsilon$] (Section~\ref{sec:phase4Gadgets})}
\begin{enumerate}
  \setlength{\itemsep}{1pt}
  \setlength{\parskip}{0pt}
  \setlength{\parsep}{0pt}
  \item Starting from the Northernmost-Westernmost inside corner, replace the inner mold gadgets on the inside perimeter with inner drill gadgets that do not have affinity to FRAME (Fig. \ref{fig-inner:c}).
  \item Place an \emph{inner post-drill gadget}. Call the result FRAME[HOLLOW[$\Upsilon$]] (Fig. \ref{fig-inner:d}).
  \item FRAME[HOLLOW[$\Upsilon$]] is unstable and separates into FRAME[START] and HOLLOW[$\Upsilon$] (Fig. \ref{fig-inner:e} and Fig. \ref{fig-inner:f}).
\end{enumerate}

\textbf{Phase 5: Filling and Repeating (Section~\ref{sec:phase5Gadgets})}
\begin{enumerate}
  \setlength{\itemsep}{1pt}
  \setlength{\parskip}{0pt}
  \setlength{\parsep}{0pt}
  \item HOLLOW[$\Upsilon$] is filled to become COPY[$\Upsilon$], which is an assembly with the same shape as $\Upsilon$.
  \item START[$\Upsilon$], repeats the process over again, starting at the second step of Phase 1.
  \item FRAME[START] also repeats creating copies of the shape, starting at the second step of Phase 3.
\end{enumerate}

\section{Replication Gadgets}\label{sec:replicationGadgets}
In this section, we describe the assemblies, or gadgets, which are constructed from the tiles in the initial state of the replication system. These gadgets are designed to work in a temperature $\tau = 10$ system. There is a constant number of these distinct gadget types which are used to replicate arbitrarily sized input assemblies. In our figures, a black line perpendicular and in the middle of the edge of two adjacent tiles indicates a unique infinite strength bond (i.e. the strength of the glue is $\gg \tau$ such that no detachment events can occur in which these tiles are separated). Although each glue strength can be found in the figures and their captions, there is a full table of glue strengths in Table~\ref{tbl:gluetable}. First, we describe the gadgets used in the process of creating the layer of mold gadgets around the outer perimeter of the input assembly.

\begin{table}
\begin{center}
\begin{tabular}{ | c | c | c | c | }
    \hline
    Label   &   Strength    &   Label   &   Strength \\ \hline
    C1      &   2           &   C2      &   9        \\ \hline
    N,E,S,W,n,e,s,w &   9    &   T*,t*,O*,o*,L   &   1        \\ \hline
    B*,b*    &   5          &  K,k    &   3        \\ \hline
    X*,Y*,Z*,x*,y*,z*      &   9  &    D       &   -7       \\ \hline
    F*,f*      &   -2      & V*,U*,H*,J*,S*,v*,u*,h*,j*,s* & 9     \\ \hline
    R,r       &   2   &    A* &  10       \\ \hline
    g*      &   9   &  M* & 5         \\ \hline
     Q &  -2 &  q &  -5 \\ \hline
\end{tabular}
\caption{The glue strengths of each glue label in the replication system.}
\label{tbl:gluetable}
\end{center}
\end{table} 

\subsection{Phase 1 Gadgets for the Outer Mold}\label{sec:phase1Gadgets}
Below we describe the gadgets used to implement Phase 1 and create MOLD[$\Upsilon$].

\begin{figure}
    \centering
	\subfloat[\label{fig:outerMoldGadget1}]{\includegraphics[width=.25\textwidth]{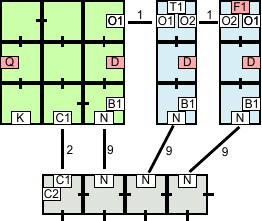}}
    \qquad %
    \subfloat[\label{fig:outerMoldGadget2}]{\includegraphics[width=.25\textwidth]{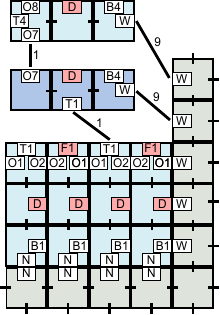}}
    \qquad %
    \subfloat[\label{fig:outerMoldGadget3}]{\includegraphics[width=.25\textwidth]{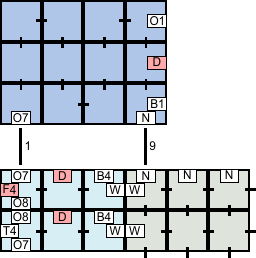}}
    \caption{(a) The start gadget attaches. $C1 + N = 2 + 9 \ge \tau=10.$, The first mold gadget attaches, cooperatively, to the start gadget and the input shape. $O1 + N = 1 + 9 \ge \tau.$ The second mold gadget attaches, cooperatively, to the previous mold gadget and the input shape. $O2 + N = 1 + 9 \ge \tau.$ (b) A concave corner of the assembly. The North$\rightarrow$West corner gadget attaches. $T1 + W = 1 + 9 \ge \tau.$ A mold gadget attaches. $O7 + W = 1 + 9 \ge \tau.$ (c) A convex corner of the assembly. The West$\rightarrow$North corner gadget attaches. $O7 + N = 1 + 9 \ge \tau.$ }
    \label{fig:outerMold}
\end{figure}

\textbf{Start gadget.} The start gadget is an assembly, designed to attach using the $C1$ glue and $N$ glue, which designates the Northernmost-Westernmost corner of the input assembly.
Once the start gadget attaches, a North mold gadget may attach (Fig. \ref{fig:outerMoldGadget1}).

\textbf{Mold gadgets.}
The mold gadgets, beginning at the start gadget as shown in Figure~\ref{fig:outerMoldGadget1}, walk along the input assembly in a clockwise manner to create MOLD[$\Upsilon$].
The North (South) mold gadgets are $1\times3$ assemblies designed to walk from West to East (East to West) along the North (South) edges of the input assembly.
The West (East) mold gadgets are $3\times1$ assemblies designed to walk from South to North (North to South) along the West (East) edges of the input assembly.
The mold gadgets expose either a positive or negative glue on their unused edge (e.g. North mold gadgets expose either a $T1$ or $F1$) used for detecting corners and drilling at corners, respectively.

\textbf{Corner mold gadgets.}
The \emph{corner mold gadgets} attach at corners of the input assembly once a mold gadget has reached the corner. Two sets of corner mold gadgets are used; one for concave corners and one for convex corners.
There are two types of concave and convex corner mold gadgets.
This is due to the edge length being even or odd. If the mold gadget that places adjacent to a corner of the shape has two negative glues, we denote the concave corner as a \emph{type-1} concave corner and otherwise as a \emph{type-2} concave corner.
A type-1 concave corner gadget attachment can be seen in Figure~\ref{fig:outerMoldGadget2}, and convex in Figure~\ref{fig:outerMoldGadget3}.
Type-2 corner gadget details can be seen in Figure~\ref{fig:outerMoldType2}.
The North$\rightarrow$West (South$\rightarrow$East) corner gadget attaches cooperatively to a North (South) mold gadget and the input shape.
The East$\rightarrow$North (West$\rightarrow$South) corner gadget attaches cooperatively to a East (West) mold gadget and the input shape.

\begin{figure}[h]
    \centering
    \subfloat[\label{fig:outerMoldType2Gadget1}]{\includegraphics[width=.28\textwidth]{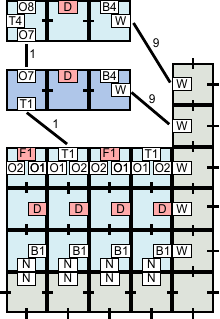}}
    \qquad %
    \subfloat[\label{fig:outerMoldType2Gadget2}]{\includegraphics[width=.28\textwidth]{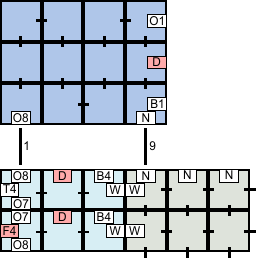}}
    \caption{(a) A type-2 concave corner of the assembly. The type-2 North$\rightarrow$West corner gadget attaches. $T1 + W = 1 + 9 \ge \tau.$ An mold gadget attaches. $O7 + W = 1 + 9 \ge \tau.$ (b) An convex corner of the assembly. The West$\rightarrow$North corner gadget attaches. $O8 + N = 1 + 9 \ge \tau.$}
    \label{fig:outerMoldType2}
\end{figure}


\subsection{Phase 2 Gadgets for Making the FRAME}\label{sec:phase2Gadgets}
\begin{figure}
    \centering
    \subfloat[\label{fig:predrillingGadget1}]{\includegraphics[width=.25\textwidth]{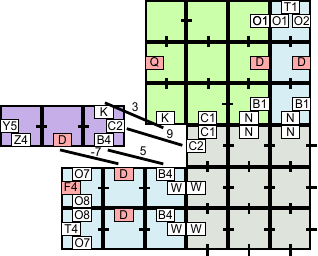}}
    \qquad %
    \subfloat[\label{fig:predrillingGadget2}]{\includegraphics[width=.25\textwidth]{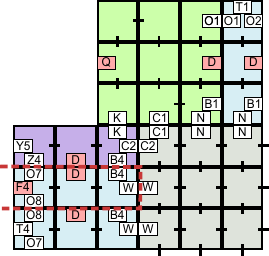}}
    \qquad %
    \subfloat[\label{fig:predrillingGadget3}]{\includegraphics[width=.25\textwidth]{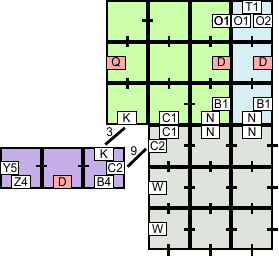}}
    \caption{The pre-drill process. (a) Case 1: The mold process has completed. The pre-drill gadget attaches cooperatively to the start gadget, input shape, and a mold gadget. $K + C2 + B4 + D = 3 + 9 + 5 - 7 \ge \tau.$ (b) The pre-drill gadget destabilizes the adjacent mold gadget. $B4 + W + O8 + D = 5 + 9 + 1 - 7 < \tau.$ (c) Case 2: The mold process has not completed, the pre-drill gadget may still attach. After attachment, the negative `D' glue prevents a mold gadget from occupying the space directly below. $K + C2 = 3 + 9 \ge \tau.$}
    \label{fig:predrillingGadgets}
\end{figure}

Below we describe the gadgets used to implement Phase 2 and create FRAME and START[$\Upsilon$].

\textbf{Pre-drill gadget.} A gadget that binds to the start gadget, a mold gadget (in some cases) , and the input shape that is designed to allow the placement of a \emph{West drill helper} to its South. It allows the drilling process to begin once mold gadgets have finished encircling the input shape (Fig. \ref{fig:predrillingGadgets}).


\begin{figure}
    \centering
    \subfloat[\label{fig:drillingGadgets1}]{\includegraphics[width=.25\textwidth]{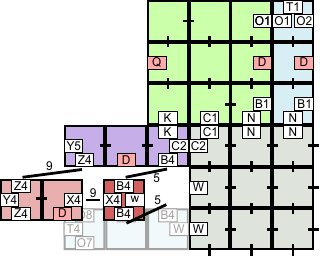}}
    \qquad %
    \subfloat[\label{fig:drillingGadgets2}]{\includegraphics[width=.25\textwidth]{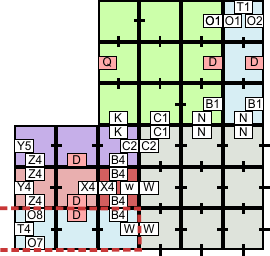}}
    \qquad %
    \subfloat[\label{fig:drillingGadgets3}]{\includegraphics[width=.25\textwidth]{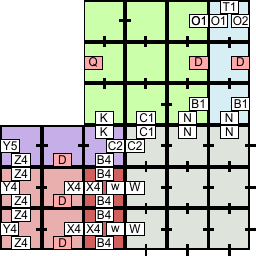}}
    \caption{The drilling process begins. (a) A West drill helper attaches cooperatively to the pre-drill gadget and a mold gadget. $B4 + B4 = 5 + 5 \ge \tau.$ A drill gadget attaches cooperatively to the pre-drill gadget and drill helper. $Z4 + X4 + D = 9 + 9 - 7 \ge \tau.$ (b) The drill gadget destabilizes the mold gadget underneath. $B4 + W + O7 + D = 5 + 9 + 1 - 7 < \tau.$ (c) The removal of the mold gadget allows the process shown in (a) and (b) to repeat. The process continues along the face of the shape.}
    \label{fig:drilling}
\end{figure}

\textbf{Drill gadgets.} The drill gadgets (Fig. \ref{fig:drillingGadgets1}), along with drill helpers, walk along the boundary of the input assembly in a counter-clockwise manner destabilizing mold gadgets and replacing them to create the FRAME. The drill gadgets do not attach to the input shape $\Upsilon$, so when the drills have removed all of the mold gadgets, the FRAME will not be attached to the input shape. The drill gadgets are assemblies that attach to the North (South) drill helpers and destabilize mold gadgets to their East (West). The West (East) drill gadgets are assemblies that attach to the West (East) drill helpers and destabilize mold gadgets to their North (South).

\textbf{Drill helpers.} The drill helpers (Fig. \ref{fig:drillingGadgets1}) are $1\times1$ assemblies that, beginning at the pre-drill gadget, walk along the boundary of the input assembly in a counter-clockwise manner exposing glues that allow drill gadgets to destabilize adjacent mold gadgets. These gadgets walk along the boundary by following glues exposed by mold gadgets, rather than using glues on the input assembly.

\textbf{Drill corner gadgets.} Drill corner gadgets bind cooperatively at type-1  and type-2 corners  to drill helpers and mold gadgets in order for the drilling process to turn corners.
Details can be seen in Figures~\ref{fig:drillingAcuteType1},\ref{fig:drillingAcuteType2},and \ref{fig:drillingObtuse}.

\begin{figure}[h]
    \centering
    \subfloat[\label{fig:drillingAcuteType1-3}]{\includegraphics[width=.28\textwidth]{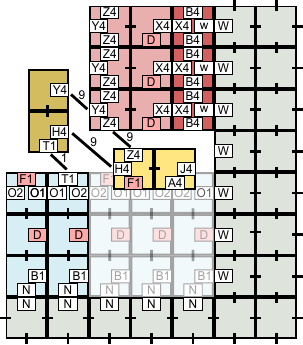}}
    \qquad %
    \subfloat[\label{fig:drillingAcuteType1-4}]{\includegraphics[width=.28\textwidth]{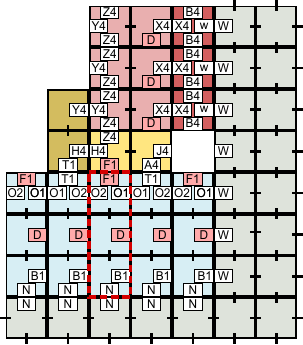}}
    \qquad %
    \subfloat[\label{fig:drillingAcuteType1-5}]{\includegraphics[width=.28\textwidth]{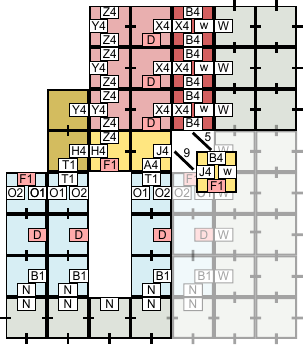}}
    \qquad %
    \subfloat[\label{fig:drillingAcuteType1-6}]{\includegraphics[width=.28\textwidth]{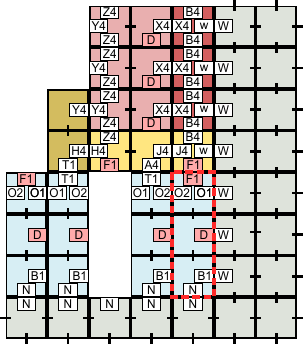}}
    \qquad %
    \subfloat[\label{fig:drillingAcuteType1-7}]{\includegraphics[width=.28\textwidth]{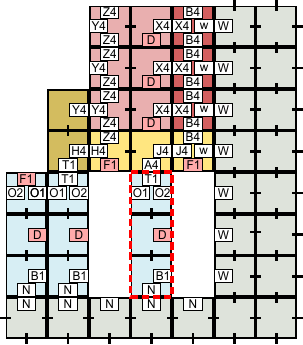}}
    \qquad %
    \subfloat[\label{fig:drillingAcuteType1-8}]{\includegraphics[width=.28\textwidth]{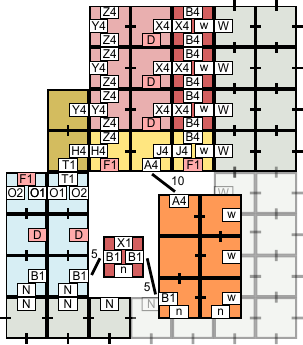}}
    \caption{The drilling process encounters a type-1 concave corner of the assembly. (a) A drill gadget destabilizes the North$\rightarrow$West corner gadget. $B4 + W + T1 + D = 5 + 9 + 1 - 7 < \tau.$  A West$\rightarrow$North drill corner helper attaches to a mold gadget and a drill gadget. $Y4 + T1 = 9 + 1 \ge \tau.$ A West$\rightarrow$North drill corner gadget attaches, cooperatively. $H4 + Z4 + F1 = 9 + 9 - 2 \ge \tau.$ (b) The West$\rightarrow$North drill corner gadget destabilizes a mold gadget to its South. $O1 + N + O2 + F1 = 1 + 9 + 1 - 2 < \tau.$ (c) A mold gadget has destabilized and has broken off the assembly. Another West$\rightarrow$drill gadget attaches. (d) destabilizes a mold gadget to its South. $N + O2 + F1 = 9 + 1 - 2 < \tau$ (e) Without any adjacent mold gadgets to its East or West, a mold gadget becomes unstable. $N = 9 < \tau.$ (f) A drill corner gadget fills in the space in the concave corner and exposes a glue that will allow drilling to continue to the West. $A4 = 10 \ge \tau.$}
    \label{fig:drillingAcuteType1}
\end{figure}

\begin{figure}[h]
    \centering
    \subfloat[\label{fig:drillingAcuteType2-1}]{\includegraphics[width=.28\textwidth]{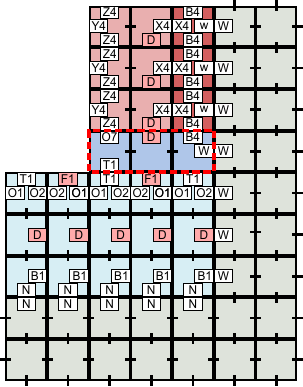}}
    \qquad %
    \subfloat[\label{fig:drillingAcuteType2-2}]{\includegraphics[width=.28\textwidth]{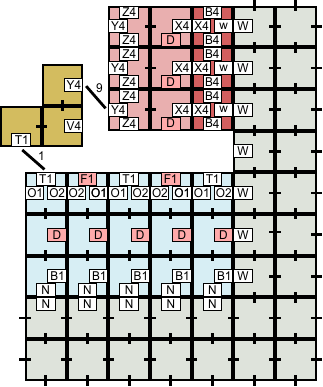}}
    \qquad %
    \subfloat[\label{fig:drillingAcuteType2-3}]{\includegraphics[width=.28\textwidth]{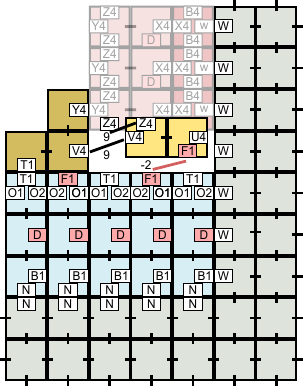}}
    \qquad %
    \subfloat[\label{fig:drillingAcuteType2-4}]{\includegraphics[width=.28\textwidth]{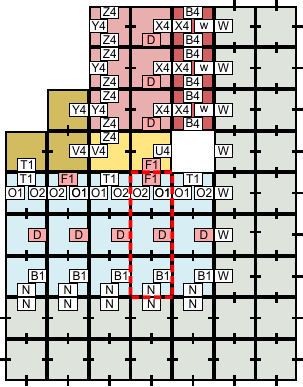}}
    \qquad %
    \subfloat[\label{fig:drillingAcuteType2-5}]{\includegraphics[width=.28\textwidth]{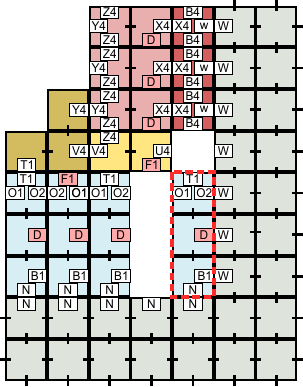}}
    \qquad %
    \subfloat[\label{fig:drillingAcuteType2-6}]{\includegraphics[width=.28\textwidth]{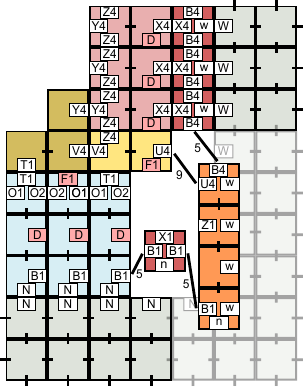}}
    \caption{The drilling process encounters a type-2 concave corner of the assembly. (a) A drill gadget destabilizes the North$\rightarrow$West corner gadget. $B4 + W + T1 + D = 5 + 9 + 1 - 7 < \tau$ (b) A West$\rightarrow$North drill corner helper attaches to a mold gadget and a drill gadget. $Y4 + T1 = 9 + 1 \ge \tau.$ (c) A West$\rightarrow$North drill corner gadget attaches, cooperatively. $V4 + Z4 + F1 = 9 + 9 - 2 \ge \tau.$ (d) The West$\rightarrow$North drill corner gadget destabilizes a mold gadget to its South. $O1 + N + O2 + F1 = 1 + 9 + 1 - 2 < \tau.$ (e) Without a mold gadget to its West, the mold gadget in the concave corner is unstable. $N = 9 < \tau.$ (f) A drill corner gadget fills in the space in the concave corner. $U4 + B4 = 9 + 5 \ge \tau.$ This gadget exposes a glue on its West side that allows drilling to continue to the West. $B1 + B1 = 5 + 5 \ge \tau.$}
    \label{fig:drillingAcuteType2}
\end{figure}

\begin{figure}[h]
    \centering
    \subfloat[\label{fig:drillingObtuse1}]{\includegraphics[width=.28\textwidth]{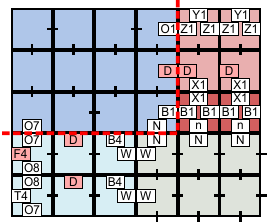}}
    \qquad %
    \subfloat[\label{fig:drillingObtuse2}]{\includegraphics[width=.28\textwidth]{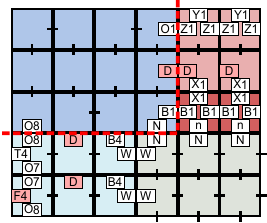}}
    \qquad %
    \subfloat[\label{fig:drillingObtuse3}]{\includegraphics[width=.28\textwidth]{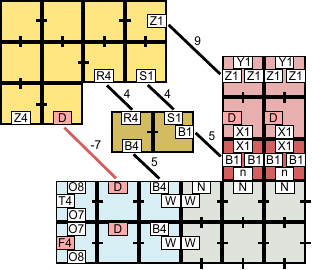}}
    \caption{The drilling process encounters convex corners of the assembly. (a) Coming from the East, drill gadgets encounter a type-1 convex corner of the assembly. The mold corner gadget becomes unstable and will break from the assembly. $O7 + N + B1 + D = 1 + 9 + 5 - 7 \ge \tau.$ (b) The drill gadgets encounter a type-2 convex corner gadget. It becomes unstable and will break from the assembly. $O8 + N + B1 + D = 1 + 9 + 5 - 7 < \tau.$ (c) Convex corner gadgets bind and allow the drilling process to continue. $B4 + B1 = 5 + 5 \ge \tau.$ $Z1 + R4 + S1 + D = 9 + 4 + 4 - 7 \ge \tau.$}
    \label{fig:drillingObtuse}
\end{figure} 


\begin{figure}
    \centering
    \subfloat[\label{fig:postdrillingGadget1}]{\includegraphics[width=.25\textwidth]{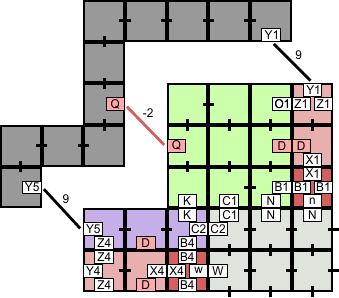}}
    \qquad %
    \subfloat[\label{fig:postdrillingGadget2}]{\includegraphics[width=.25\textwidth]{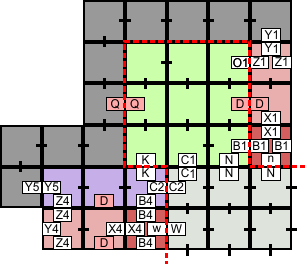}}
    \qquad %
    \subfloat[\label{fig:postdrillingGadget3}]{\includegraphics[width=.25\textwidth]{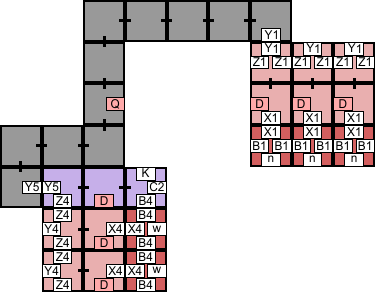}}
    \caption{The post-drill process begins. (a) The post-drill gadget binds at the Northernmost-Westernmost corner of the assembly that contains the input shape. It binds cooperatively to the last placed drill block and the pre-drill gadget. $Y5 + Y1 + Q = 9 + 9 - 2 \ge \tau.$ (b) Once placed, the pre-drill gadget destabilizes a cut which includes the input shape and the start block. $B1 + C2 + K + Q + D = 5 + 9 + 3 - 2 - 7 < \tau.$ (c) Once the assembly with the input shape and the start block detached, the FRAME remains.}
    \label{fig:postdrillingGadgets}
\end{figure}

\textbf{Post-drill gadget.} \emph{Post-drilling} refers to the the process of removing the input shape to create a FRAME. It is needed because the pre-drill gadget is still attached to the input shape. The post-drill gadget may attach once the drill gadgets have encircled the input shape. It attaches to the pre-drill gadget and last-placed drill gadget. Once attached, it destabilizes the assembly with the input shape and start gadget to create the FRAME. (See Fig. \ref{fig:postdrillingGadgets}.)

\subsection{Phase 3 Gadgets for the Inner Mold}\label{sec:phase3Gadgets}
\begin{figure}
    \centering
	\subfloat[\label{fig:innerMoldGadget1}]{\includegraphics[width=.25\textwidth]{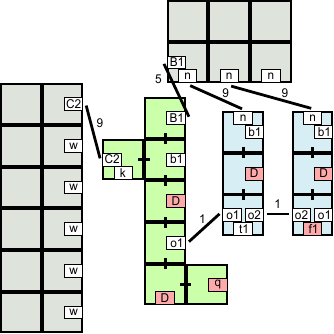}}
    \qquad %
    \subfloat[\label{fig:innerMoldGadget2}]{\includegraphics[width=.25\textwidth]{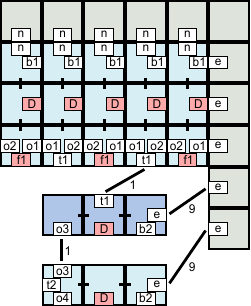}}
    \qquad %
    \subfloat[\label{fig:innerMoldGadget3}]{\includegraphics[width=.25\textwidth]{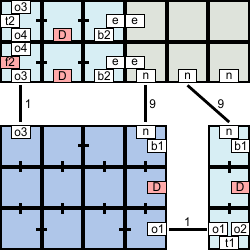}}
    \caption{Starting the inner mold process and handling type-1 corners. (a) The inner start gadget attaches to the inside perimeter of the FRAME. $C2 + B1 = 9 + 5 \ge \tau.$ An inner mold gadget attaches cooperatively to the start gadget and the FRAME. $o1 + n = 1 + 9 \ge \tau.$ Another inner mold gadget attaches cooperatively to the FRAME and the previous inner mold gadget. $o2 + n = 1 + 9 \ge \tau.$ (b) A concave corner of the assembly. A South$\rightarrow$West concave corner gadget attaches to the concave corner.  $t1 + e = 1 + 9 \ge \tau.$ An inner mold gadget attaches cooperatively to the FRAME and the South$\rightarrow$West concave corner gadget. $o3 + e = 1 + 9 \ge \tau.$ (c) A convex corner of the assembly. A West$\rightarrow$South convex corner gadget attaches at the convex corner, allowing the mold process to continue around the convex corner. $o3 + n = 1 + 9 \ge \tau.$ $o1 + n = 1 + 9 \ge \tau.$}
    \label{fig:innerMold}
\end{figure}

\begin{figure}[h]
    \centering
    \subfloat[\label{fig:innerMoldGadgetType2-1}]{\includegraphics[width=.28\textwidth]{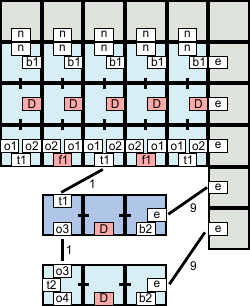}}
    \qquad %
    \subfloat[\label{fig:innerMoldGadgetType2-2}]{\includegraphics[width=.28\textwidth]{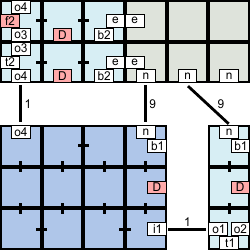}}
    \caption{Type-2 concave and convex corners of the assembly. (a) A type-2 South$\rightarrow$West concave corner gadget binds, which allows the mold process to continue around the concave corner. $t1 + e = 1 + 9 \ge \tau.$ $o3 + e = 1 + 9 \ge \tau.$(b) A type-2 West$\rightarrow$South corner gadget binds allowing the mold process to continue around the convex corner. $o4 + n = 1 + 9 \ge \tau.$ $o1 + n = 1 + 9 \ge \tau.$}
    \label{fig:innerMoldType2}
\end{figure} 

Here we describe the gadgets used to implement Phase 3 and create FRAME[MOLD]. The \emph{inner mold gadgets} are similar to the mold gadgets used in Section \ref{sec:phase1Gadgets}, with changes to allow them to encircle the inside perimeter of the FRAME, rather than the outside perimeter of the input shape $\Upsilon$. The Phase 3 gadgets (shown in Figure~\ref{fig:innerMold}) include an \emph{inner start gadget} which attaches to the FRAME to allow the attachment of inner mold gadgets which attach along the inside perimeter of the FRAME in a clockwise manner. The inner mold gadgets reflect the form and function of the mold gadgets shown in Section~\ref{sec:phase1Gadgets}, handling corners in a similar way. Type-2 inner mold corner gadgets can be seen in Figure~\ref{fig:innerMoldType2}.

\subsection{Phase 4 Gadgets for Making the Hollow Outline}\label{sec:phase4Gadgets}
\begin{figure}
    \centering
    \subfloat[\label{fig:innerPreDrillingGadgets1}]{\includegraphics[width=.25\textwidth]{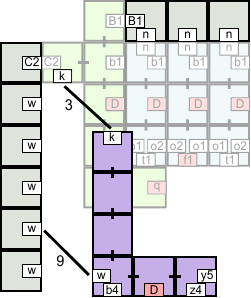}}
    \qquad %
    \subfloat[\label{fig:innerPreDrillingGadgets2}]{\includegraphics[width=.25\textwidth]{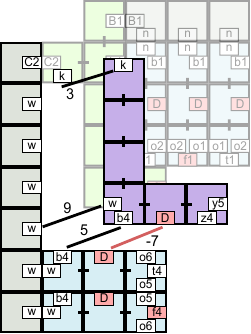}}
    \qquad %
    \subfloat[\label{fig:innerPreDrillingGadgets3}]{\includegraphics[width=.25\textwidth]{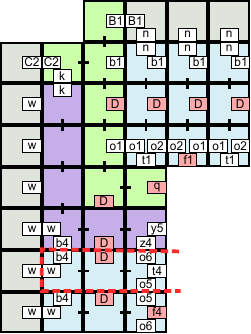}}
    \caption{\emph{The inner pre-drilling process begins. (a)} Case 1: The inner mold process has not completed and the inner pre-drill gadget binds cooperatively to the FRAME and the inner start gadget. $w + k = 9 + 3 \ge \tau.$ (b) Case 2: The inner mold process has completed and the inner pre-drill gadget binds cooperatively to the FRAME, inner start gadget, and an inner mold gadget. $w + k + b4 + D = 9 + 3 + 5 - 7 \ge \tau.$ (c) The inner pre-drill gadget prevents any inner mold gadget from attaching to its south and destabilizes any inner mold gadget that may have been in that space. $b4 + o5 + w + D = 5 + 1 + 9 - 7 < \tau.$}
    \label{fig:innerPreDrilling}
\end{figure}

\begin{figure}[h]
    \centering
    \subfloat[\label{fig:innerDrillingGadgets1}]{\includegraphics[width=.25\textwidth]{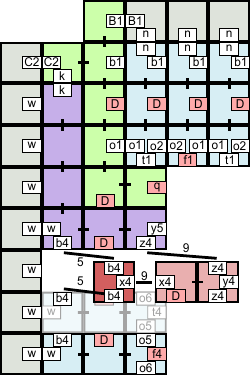}}
    \qquad %
    \subfloat[\label{fig:innerDrillingGadgets2}]{\includegraphics[width=.25\textwidth]{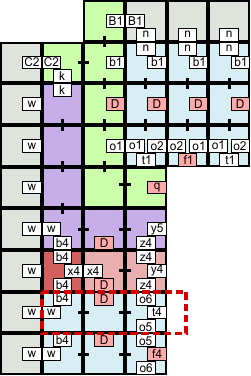}}
    \qquad %
    \subfloat[\label{fig:innerDrillingGadgets3}]{\includegraphics[width=.25\textwidth]{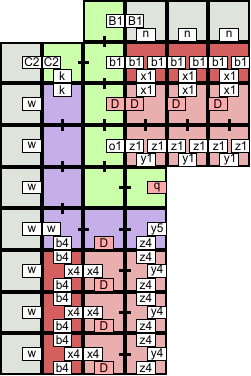}}
    \caption{The inner drilling process begins. (a) A drill helper binds to the inner pre-drill gadget and an inner mold gadget. $b4 + b4 = 5 + 5 \ge \tau.$ A drill block binds cooperatively to the drill helper and pre-drill gadget. $x4 + z4 + D = 9 + 9 - 7 \ge \tau.$ (b) The drill gadgets destabilize an inner mold gadget. $b4 + o5 + w + D = 5 + 1 + 9 - 7 < \tau.$ (c) The drilling process continues counter-clockwise along the inside perimeter of the FRAME until the drill gadgets reach the inner start gadget.}
    \label{fig:innerDrilling}
\end{figure}

\begin{figure}[t]
    \vspace*{-.3cm}
    \centering
    \subfloat[\label{fig:innerPostDrillingGadgets1}]{\includegraphics[width=.25\textwidth]{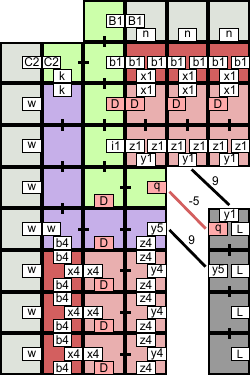}}
    \hspace*{3cm}
    \subfloat[\label{fig:innerPostDrillingGadgets2}]{\includegraphics[width=.25\textwidth]{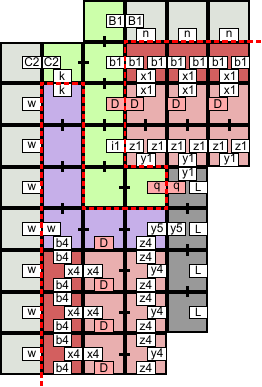}}
    \caption{The inner post-drilling process. (a) The inner post-drill gadget binds cooperatively to the inner pre-drill gadget and the second-to-last placed drill gadget. $y5 + y1 + q = 9 + 9 - 5 \ge \tau.$ (b) The assembly with the FRAME and the start gadget becomes unstable and breaks off the assembly. $b1 + w + k + q + D = 5 + 9 + 3 - 5 - 7 < \tau.$}
    \label{fig:innerPostDrilling}
    \vspace*{-.3cm}
\end{figure}

Below we describe the gadgets used to implement Phase 4 and create FRAME[START] and HOLLOW[$\Upsilon$]. The \emph{inner pre-drill gadget} can attach to the FRAME after the inner start gadget has placed, similar to the pre-drill gadget shown in Section~\ref{sec:phase2Gadgets}. The inner pre-drill gadget can be seen in Figure \ref{fig:innerPreDrilling}. Once the inner pre-drilling gadget has attached, the inner drill gadgets attach along the inside edge of the frame, removing the inner mold gadgets one by one counterclockwise. These gadgets reflect the form and function of the drill gadgets shown in Section~\ref{sec:phase2Gadgets}. 
The inner drilling process reflects that of the outer drilling, and can be seen in Figure~\ref{fig:innerDrillingGadgets1}. 
Corners are handled similarly to the \emph{outer} drill gadgets, from Phase 2 (Section~\ref{sec:phase2Gadgets}) An. When the drills completely attach to the inner perimeter of the FRAME, the \emph{inner post-drill} gadget can attach, as seen in Figure~\ref{fig:innerPostDrillingGadgets1}, to create FRAME[HOLLOW[$\Upsilon$]]. After the inner post-drill gadget attaches, the assembly destabilizes into FRAME[START] and HOLLOW[$\Upsilon$], as shown in Figure~\ref{fig:innerPostDrillingGadgets2}.

\subsection{Phase 5 Gadgets to Fill the Hollow Outline}\label{sec:phase5Gadgets}

Below we describe the gadgets used to implement Phase 5 to fill HOLLOW[$\Upsilon$] to create COPY[$\Upsilon$]. The \emph{filler gadgets} begin the process of filling the shape, shown in Figure~\ref{fig:innerFilling}. Two regions of HOLLOW[$\Upsilon$] need to be filled to become COPY[$\Upsilon$], the Northernmost-Westernmost corner (Figure~\ref{fig:innerPostDrillingGadgets3}) and the interior (Figures~\ref{fig:innerFilling1},\ref{fig:innerFilling2}).


\begin{figure}[t!]
    \centering
    \subfloat[\label{fig:innerPostDrillingGadgets3}]{\includegraphics[width=.25\textwidth]{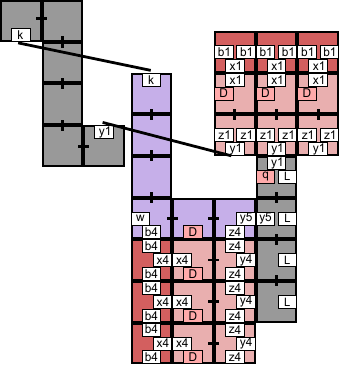}}
    \qquad %
    \subfloat[\label{fig:innerFilling1}]{\includegraphics[width=.25\textwidth]{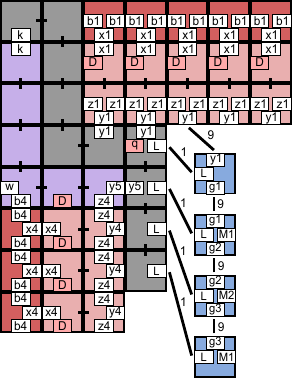}}
    \qquad %
    \subfloat[\label{fig:innerFilling2}]{\includegraphics[width=.25\textwidth]{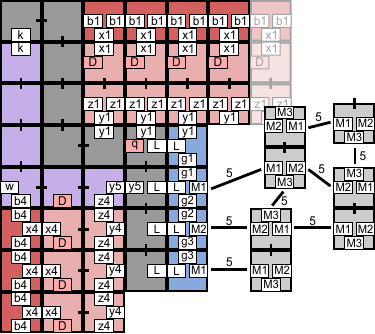}}
    \qquad %
    \caption{The inner filling process. (a) The first filler gadget comes in to fill in the Northernmost-Westernmost corner of HOLLOW[$\Upsilon$]. $k + y1 = 3 + 9 \ge \tau.$ (b) $L=1, g*=9.$ (c) $M1=M2=M3=5.$ First, the 4 tiles in (b) attach one by one to the hollow shape. The filler gadget in (c) can then attach and flood the inside of the shape.}
    \label{fig:innerFilling}
\end{figure}

\section{Universal Shape Replication}\label{sec:universalShapeReplication}
In this section we formally state the results, including the class of shapes which can be replicated by the replication gadgets discussed in the previous section. The input shape must be sufficiently large for the replication gadgets to assemble in the intended manner, so we define the \emph{feature size} of a shape as follows (we use the same definition as~\cite{Abel:2010:SRT:1873601.1873686}): for two points $a,b$ in the shape, let $d(a,b) = \max(|a_x - b_x|, |a_y - b_y|)$. Then the feature size is the minimum $d(a,b)$ such that $a,b$ are on two non-adjacent edges of the shape.

\begin{theorem}\label{thm:main}
There exists a sleek universal shape replicator $\Gamma = (\sigma, 10)$ for genus-$0$ (hole-free) shapes with feature size of at least $9$.
\end{theorem}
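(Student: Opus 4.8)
The plan is to treat Theorem~\ref{thm:main} as a verification of the explicit construction assembled in Sections~\ref{sec:highLevelShapeReplication} and~\ref{sec:replicationGadgets}. First I would fix the initial state $\sigma$ to consist of infinitely many copies of each gadget type catalogued there (outer/inner start, mold, corner-mold of both parities and both convex/concave flavors, pre-drill, drill, drill-helper, drill-corner, post-drill, and filler gadgets), together with a short preprocessing argument showing that each such $O(1)$-size gadget is itself producible at temperature $10$ from a constant number of tile types using only the infinite-strength internal bonds drawn in the figures; this simultaneously discharges the ``constant-size objects from constant tile types'' claim of the abstract and the \emph{sleek} requirement that $\Upsilon$ carry a single exposed glue per orientation (with the $O(1)$ special corner tiles $C1,C2$). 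For a given hole-free polyomino $X$ of feature size at least $9$, $\Upsilon$ is the assembly of shape $X$ decorated with $N,E,S,W$ on its outward edges and $C1,C2$ at the Northwest unit, and $\Gamma'=(\sigma\cup\Upsilon,10)$ is the system to analyze.

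The core of the argument is a chain of five phase lemmas, each asserting both \emph{existence} (there is a valid assembly sequence realizing the intended next object) and \emph{no-junk} (every super-constant terminal assembly reachable within that phase is the intended one). Existence is an induction along each edge: a mold gadget cooperatively attaches to its predecessor and the shape with glue sum $O_i + N = 1+9 \geq \tau$, so the clockwise ``walk'' around the perimeter is always enabled, and the finitely many corner transitions (concave/convex $\times$ type-1/type-2, selected by edge-length parity) are each enabled by the cooperative sums tabulated in the captions of Figures~\ref{fig:outerMold}--\ref{fig:drillingObtuse}. The detachment events are certified by the same bookkeeping against Table~\ref{tbl:gluetable}: each drill gadget contributes a negative $D=-7$ glue so that the cut isolating the adjacent mold gadget has strength $B4+W+O7+D = 5+9+1-7 = 8 < 10$, hence is $\tau$-breakable, and I would verify (the caption arithmetic plus a connectivity check of the two resulting pieces via positive-strength paths) that exactly one constant-size mold gadget sheds while the main assembly stays connected; the post-drill cut $B1+C2+K+Q+D = 8 < 10$ separates \textbf{START}$[\Upsilon]$ from \textbf{FRAME}, and the inner analogue separates \textbf{HOLLOW}$[\Upsilon]$ from \textbf{FRAME[START]}. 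Feature size at least $9$ enters exactly here: the constant $9$ is calibrated to the maximal footprint of the stacked mold/drill/frame/inner-mold/inner-drill corridors, so the bound guarantees that along every thin neck the corridor swept in each phase is disjoint from the rest of the boundary, making the perimeter walk unobstructed and well defined.

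With the phase lemmas in hand, I would assemble the two defining properties of Definition~\ref{def}. Unboundedness follows from the two catalytic loops identified in Phase~5: \textbf{START}$[\Upsilon]$ re-enters Phase~1 and regenerates a fresh \textbf{FRAME} (and another \textbf{START}$[\Upsilon]$) each time, while each \textbf{FRAME[START]} re-enters Phase~3 and emits a fresh \textbf{HOLLOW}$[\Upsilon]$ (and regenerates \textbf{FRAME[START]}); since \textbf{HOLLOW}$[\Upsilon]$ deterministically fills to a terminal \textbf{COPY}$[\Upsilon]$ of shape $X$, from any producible state $S$ one can drive at least $n$ such fillings to reach the required $S'$ with $S \rightarrow^{\tau} S'$. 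For the second property I would check that the only super-constant terminal assembly is \textbf{COPY}$[\Upsilon]$: the assemblies \textbf{FRAME}, \textbf{FRAME[START]}, \textbf{MOLD}$[\Upsilon]$, and \textbf{START}$[\Upsilon]$ are all super-constant but each is either combinable (with a start/inner-start gadget or the next walking gadget) or breakable, hence non-terminal, whereas every partial mold/drill layer is super-constant only while still combinable along its active frontier. Sleekness then follows because each intermediate object is the shape $X$ plus a constant-width decorated border, so every producible assembly has size $O(|X|)$, and $\Upsilon$ is as prescribed.

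The hard part is the no-junk half of the phase lemmas in the face of the full nondeterminism of the two-handed model: any two combinable producible assemblies may merge in any order, and negative glues let assemblies that were never ``meant'' to touch first combine and only later break. My plan to control this is a frontier invariant: I would show that every reachable super-constant assembly is \emph{on-pathway}, i.e.\ it equals one of the intended partial assemblies up to a constant amount of attached-but-soon-to-detach material, by arguing that any off-pathway combination either (i) fails to be $\tau$-stable and breaks back along a cut certified $<10$ by Table~\ref{tbl:gluetable}, or (ii) is confined to a single gadget footprint and is therefore constant-size trash. The delicate sub-case is proving that two independently grown large assemblies (e.g.\ two partial \textbf{FRAME}s, or a \textbf{FRAME} and a \textbf{COPY}) cannot fuse into a stable super-$O(|X|)$ object: this reduces to checking that their mutually exposed surface glues never sum to $\tau$ across any admissible translation, which I would establish from the disjointness of the positive glue alphabets used on the ``finished'' surfaces versus the ``active'' frontiers, again leaning on the feature-size bound to preclude interlocking geometries.
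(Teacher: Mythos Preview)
Your proposal is correct and follows essentially the same approach as the paper: take the explicit construction of Sections~\ref{sec:highLevelShapeReplication}--\ref{sec:replicationGadgets}, instantiate $\sigma$ with the gadget tiles, and verify Definition~\ref{def} by tracing the phase-by-phase process and the two replication loops (\textbf{START}$[\Upsilon]$ and \textbf{FRAME[START]}) to get unboundedly many terminal copies. In fact your plan is considerably more rigorous than the paper's own proof, which simply asserts that ``$\Gamma'$ follows the process described in Section~\ref{sec:replicationGadgets}'' and does not explicitly confront the two-handed nondeterminism or off-pathway combinations that you correctly flag as the hard part and propose to handle via a frontier invariant and glue-alphabet disjointness.
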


\begin{proof}
The proof follows by constructing an unbounded shape replication system $\Gamma = ( \sigma, 10 )$.
Consider an initial assembly state $\sigma$ consisting of infinite counts of the tiles which construct the replication gadgets shown in Section~\ref{sec:replicationGadgets}.
For any shape $X$ of genus-$0$ and minimum feature size $9$, consider an assembly $\Upsilon$ of shape $X$ with the properties discussed at the beginning of Section~\ref{sec:replicationGadgets} (i.e., the exposed glues on $\Upsilon$ are $N,E,S,$ and $W$ based on edge orientation, and there is a special glue in the Northernmost-Westernmost position of $\Upsilon$ which exposes glues $C1$ to the North and $C2$ to the West).
Then we prove inductively that $\Gamma' = ( \sigma\bigcup\Upsilon, 10 )$ has the following property: for any $n\in\mathbb{N}$ and producible state $S$ of $\Gamma'$, there exists a producible state $S'$ containing at least $n$ terminal assemblies of shape $X$ such that $S\rightarrow S'$.

For the base, note that the $\Gamma'$ follows the process described in Section~\ref{sec:replicationGadgets}.
When the FRAME detaches from HOLLOW[$\Upsilon$], HOLLOW[$\Upsilon$] is filled to generate $1$ terminal assembly of shape $X$.
Then, for any producible state $S$, $\sigma\bigcup\Upsilon\rightarrow S$ by definition, so $S$ contains at least $1$ terminal of shape $X$ or $S\rightarrow \hat{S}$ such that $\hat{S}$ contains at least $1$ terminal of shape $X$.

Now, consider any producible state $A$ such that $A$ has at least $n\in\mathbb{N}$ terminal assemblies of shape $X$.
For $A$ to have at least one such terminal, HOLLOW[$\Upsilon$] must have detached from the FRAME.
When this occurs, the FRAME is still attached to the inner start gadget, so the inner mold/drill process repeats, generating HOLLOW[$\Upsilon$].
Again, HOLLOW[$\Upsilon$] detaches from the FRAME, and then HOLLOW[$\Upsilon$] is filled, resulting in $1$ more terminal assembly of shape $X$.
Therefore $A \rightarrow A'$ such that $A'$ contains at least $n+1$ terminal assemblies of shape $X$.
Then, since $S$ or $\hat{S}$ contains at least $1$ terminal of shape $X$, $S\rightarrow S'$ such that $S'$ has at least $n$ terminal assemblies of shape $X$ for any $n$. This satisfies the first property of Definition~\ref{def}.


Further, any terminal assembly $B$ in the system that does not have shape $X$, $|B| = \mathcal{O}(1)$, satisfying the second property of Definition~\ref{def}.
Note that the only assemblies which are larger than $\mathcal{O}(1)$ size are assemblies related to the input shape, the FRAME, and HOLLOW[$\Upsilon$].
The input shape and the intermediate assemblies produced while creating the FRAME are not terminal since the process repeats once the FRAME is detached.
The FRAME and the intermediate assemblies produced while creating HOLLOW[$\Upsilon$] are not terminal since the process repeats once HOLLOW[$\Upsilon$] detaches.
HOLLOW[$\Upsilon$] fills to create a terminal assembly of shape $X$, so none of the intermediate assemblies are terminal.

Then, $\Gamma = ( \sigma, 10 )$ is a universal shape replicator for genus-$0$ (hole-free) shapes with minimum feature size $9$.
To show that it is a sleek universal shape replicator (i.e. the bonus properties discussed in the definitions apply), note that the input shape considered here has the same properties as described in that definition, and that the input shape, the FRAME, HOLLOW[$\Upsilon$], and the intermediate steps of each of these assemblies, do not grow larger than $\mathcal{O}(|X|)$ in size.
\end{proof} 

\section{Conclusion}\label{sec:conclusion}

In this work we formally introduced the problem of shape replication in the 2-handed tile assembly model and provided a universal replication system for all genus-0 shapes with at least a constant minimum feature size.  Shape replication has been studied in more powerful self-assembly models such as the staged self-assembly model and the signal tile model.  However, our result constitutes the first example of general shape replication in a passive model of self-assembly where no outside experimenter intervention is required, and where the system monomers are state-less, static pieces that interact based purely on the attraction and repulsion of surface chemistry.

Our work opens up a number of directions for future work.  One direction includes analyzing and improving the \emph{rate} of replication.  Under a reasonable model of replication time, our construction should costitute a \emph{quadratic} replicator$-$ in that after time $t$, $\Omega(t^2)$ copies of the input shape are expected.  Designing a faster replicator is an open question.  In particular, achieving an \emph{exponential} replicator is an important goal.  Further, to properly consider these questions requires a formal modelling of replication rates for this model.

Another direction is to further generalize the class of shapes that can be replicated.  For example, can shapes with holes be replicated?  This is likely difficult, but might be achievable by drilling into the input assembly in a carefully engineered way.  Achieving such general replication would be the first example of shape replication that extends beyond genus-0. This seems to require a shape with more complex glues, which leads to another area of research.

The consideration of variations of the \emph{sleek} requirements may be of interest.  For example, removing the need for any special tiles from the replication system might be achievable.  Or, allowing for more complex input assemblies could allow for high genus replication, as discussed above.  Finally, determining the lowest necessary temperature and glue strengths needed for replication is an open question.  We use temperature value 10 to maximize clarity of the construction and have not attempted to optimize this value.  To help such optimization, we have included a compiled table (Table~\ref{tbl:inequalities}) showing the inequality specifications induced by our construction for each gadget to help guide where a modification to the replication algorithm might reduce the temperature needed.

Finally, extending replication to work in a planar fashion is an open question, as the current construction requires a large assembly to ``pop'' out of an encased frame.  Planar replication systems might provide insight into extending replication into 3D, while maintaining a \emph{spatial} construction.

\begin{table}
\begin{tabular}{ r r }
\hline
\multicolumn{2}{|c|}{Phase 1} \\
\hline
$C + N \geq \tau$ & $O + N \geq \tau$ \\
$T + W \geq \tau$ \\

\hline
\multicolumn{2}{|c|}{Phase 2: Pre-drilling} \\
\hline
$K + C2 + B + D \geq \tau$ & $K + C2 \geq \tau$ \\
$B + W + O + D < \tau$ \\

\hline
\multicolumn{2}{|c|}{Phase 2: Drilling} \\
\hline
$B + B \geq \tau$ & $Z + X + D \geq \tau$ \\
$B + X \geq \tau$ \\

\hline
\multicolumn{2}{|c|}{Phase 2: Drilling Type-1 Corners} \\
\hline
$B + W + T + D < \tau$ & $Y + T \geq \tau$ \\
$Z + V + F \geq \tau$ & $Y + Z + T + F \geq \tau$ \\
$Z + B + V + Y + F \geq \tau$ & $Z + X + V + Y + F \geq \tau$ \\
$T + Z + B + F \geq \tau$ & $O + N + O + F < \tau$ \\
$B + B \geq \tau$ & $B + U \geq \tau$ \\

\hline
\multicolumn{2}{|c|}{Phase 2: Drilling Type-2 Corners} \\
\hline
$Z + H + F \geq \tau$ & $H + Y + Z + B + F \geq \tau$ \\
$H + Y + Z + X + F \geq \tau$ & $T + Z + B + F \geq \tau$ \\
$O + N + O + F < \tau$ & $N < \tau$ \\
$A \geq \tau$ & $B + B \geq \tau$ \\

\hline
\multicolumn{2}{|c|}{Phase 2: Drilling Obtuse Corners} \\
\hline
$B + N + O + D < \tau$ & $B + B \geq \tau$ \\
$Z + R + S + D \geq \tau$ & $O + W + B + D < \tau$ \\

\hline
\multicolumn{2}{|c|}{Phase 2: Post-Drilling} \\
\hline
$B + N + C1 + K + D \geq \tau$ & $Y + Y + Q \geq \tau$ \\
$Y + Z + B + N + C1 + K + Q \geq \tau$ & $B + N + C1 + C2 + B + Z + Y + Q + D \geq \tau$ \\
$Y + Z + B + C2 + K + Q \geq \tau$ & $B + N + C1 + K + D + Q \geq \tau$ \\
$C2  + C1 + N \geq \tau$ & $K + B + C2 + Q + D < \tau$ \\

\hline
\multicolumn{2}{|c|}{Phase 3} \\
\hline
$C2 + B \geq \tau$ & $o + n \geq \tau$ \\
$t + e \geq \tau$ \\

\hline
\multicolumn{2}{|c|}{Phase 4 - Pre-drilling} \\
\hline
$k + w + b + D \geq \tau$ & $b + w + i + D < \tau$ \\
$k + w \ge \tau$ \\

\hline
\multicolumn{2}{|c|}{Phase 4 - Drilling} \\
\hline
$b + b \geq \tau$ & $z + x  + D \geq \tau$ \\
$b + w + i + D < \tau$ & $b + x \geq \tau$ \\

\hline
\multicolumn{2}{|c|}{Phase 4 - Post-drilling} \\
\hline
$B + b + k + C2 + D \geq \tau$ & $y + y + Q \geq \tau$ \\
$B + b + k + C2 + D + Q \geq \tau$ & $b + k + w + D + q < \tau$ \\
$B + C2 \geq \tau$ & $B +  C2 + C1 + N + B1 + O + D \geq \tau$ \\
$Z + B \geq \tau$ & $B + W + W + O + D \geq \tau$ \\
$k + y \ge \tau$ \\

\hline
\multicolumn{2}{|c|}{Phase 5} \\
\hline
$L + g* \geq \tau$ & $M+M \geq \tau$ \\
\end{tabular}
\caption{Shown are the inequalities which must be satisfied for the replication gadgets shown in Section~\ref{sec:replicationGadgets} to function in the way required to prove Theorem~\ref{thm:main}. All single glue labels except $A$  must have strength $< \tau$. Unless otherwise stated, in these inequalities, a glue label $G$ represents all glues $G*$ (e.g., G1, G2, etc.).}
\label{tbl:inequalities}
\end{table}


\newpage

\bibliographystyle{amsplain}
\bibliography{tam}

\providecommand{\bysame}{\leavevmode\hbox to3em{\hrulefill}\thinspace}
\providecommand{\MR}{\relax\ifhmode\unskip\space\fi MR }
\providecommand{\MRhref}[2]{%
  \href{http://www.ams.org/mathscinet-getitem?mr=#1}{#2}
}
\providecommand{\href}[2]{#2}
\begin{thebibliography}{10}

\bibitem{Abel:2010:SRT:1873601.1873686}
Zachary Abel, Nadia Benbernou, Mirela Damian, Erik~D. Demaine, Martin~L.
  Demaine, Robin Flatland, Scott~D. Kominers, and Robert Schwelle, \emph{Shape
  replication through self-assembly and rnase enzymes}, Proceedings of the
  Twenty-first Annual ACM-SIAM Symposium on Discrete Algorithms (Philadelphia,
  PA, USA), SODA '10, Society for Industrial and Applied Mathematics, 2010,
  pp.~1045--1064.

\bibitem{2HABTO}
Sarah Cannon, Erik~D. Demaine, Martin~L. Demaine, Sarah Eisenstat, Matthew~J.
  Patitz, Robert Schweller, Scott~M. Summers, and Andrew Winslow, \emph{Two
  hands are better than one (up to constant factors): Self-assembly in the 2ham
  vs.\ atam}, Proceedings of the 30th International Symposium on Theoretical
  Aspects of Computer Science (STACS 2013), LIPIcs, vol.~20, 2013,
  pp.~172--184.

\bibitem{AGKS05g}
Qi~Cheng, Gagan Aggarwal, Michael~H. Goldwasser, Ming-Yang Kao, Robert~T.
  Schweller, and Pablo~Moisset de~Espan\'{e}s, \emph{Complexities for
  generalized models of self-assembly}, SIAM Journal on Computing \textbf{34}
  (2005), 1493--1515.

\bibitem{Doty2013}
David Doty, Lila Kari, and Beno{\^i}t Masson, \emph{Negative interactions in
  irreversible self-assembly}, Algorithmica \textbf{66} (2013), no.~1,
  153--172.

\bibitem{Fochtman2015}
Tyler Fochtman, Jacob Hendricks, Jennifer~E. Padilla, Matthew~J. Patitz, and
  Trent~A. Rogers, \emph{Signal transmission across tile assemblies: 3d static
  tiles simulate active self-assembly by 2d signal-passing tiles}, Natural
  Computing \textbf{14} (2015), no.~2, 251--264.

\bibitem{Hendricks2015}
Jacob Hendricks, Matthew~J. Patitz, and Trent~A. Rogers, \emph{Replication of
  arbitrary hole-free shapes via self-assembly with signal-passing tiles},
  pp.~202--214, Springer International Publishing, Cham, 2015.

\bibitem{keenan14exponential}
Alexandra Keenan, Robert Schweller, and Xingsi Zhong, \emph{Exponential
  replication of patterns in the signal tile assembly model}, Natural Computing
  \textbf{14} (2015), no.~2, 265--278.

\bibitem{PRS2016RMN}
Matthew~J. Patitz, Trent~A. Rogers, Robert Schweller, Scott~M. Summers, and
  Andrew Winslow, \emph{Resiliency to multiple nucleation in temperature-1
  self-assembly}, DNA Computing and Molecular Programming, Springer
  International Publishing, 2016.

\bibitem{rgTAM}
MatthewJ. Patitz, RobertT. Schweller, and ScottM. Summers, \emph{Exact shapes
  and turing universality at temperature 1 with a single negative glue}, DNA
  Computing and Molecular Programming, LNCS, vol. 6937, 2011, pp.~175--189.

\bibitem{REIF20111592}
John~H. Reif, Sudheer Sahu, and Peng Yin, \emph{Complexity of graph
  self-assembly in accretive systems and self-destructible systems},
  Theoretical Computer Science \textbf{412} (2011), no.~17, 1592 -- 1605.

\bibitem{Rothemund01022000}
Paul W.~K. Rothemund, \emph{Using lateral capillary forces to compute by
  self-assembly}, Proceedings of the National Academy of Sciences \textbf{97}
  (2000), no.~3, 984--989.

\bibitem{Roth01}
Paul W.~K. Rothemund, \emph{Theory and experiments in algorithmic
  self-assembly}, Ph.D. thesis, University of Southern California, December
  2001.

\bibitem{SCHUL2012}
Rebecca Schulman, Bernard Yurke, and Erik Winfree, \emph{Robust
  self-replication of combinatorial information via crystal growth and
  scission}, PNAS \textbf{109} (2012), no.~17, 6405--6410.

\bibitem{SS2013FEC}
Robert Schweller and Michael Sherman, \emph{Fuel efficient computation in
  passive self-assembly}, SODA 2013: Proceedings of the 24th Annual ACM-SIAM
  Symposium on Discrete Algorithms, SIAM, 2013, pp.~1513--1525.

\bibitem{MemoryAlpha}
Memory~Alpha Wikia, \emph{Replicator},
  \url{http://memory-alpha.wikia.com/wiki/Replicator}.

\end{thebibliography}


\end{document}